\patchcmd{\@maketitle}{\begin{center}}{\begin{flushleft}}{}{}
\patchcmd{\@maketitle}{\begin{tabular}[t]{c}}{\begin{tabular}[t]{@{}l}}{}{}
\patchcmd{\@maketitle}{\end{center}}{\end{flushleft}}{}{}
\renewenvironment{abstract}
{\small\section*
{\bfseries\noindent{\raisebox{-.15\baselineskip}{\normalsize\abstractname}}\hrulefill} 
}
\newtheorem{theorem}{Theorem}
\newtheorem{lemma}{Lemma}
\theoremstyle{definition}
\author{hkhj\\hjk}
\author{hjjkhj\\hjk}
\begin{document}
\title{Exploring a Dynamic Ring without Landmark}
    \author{
    \large{Archak Das}\\
    \small {Department of Mathematics, Jadavpur University, India}\\
    \small \emph{archakdas.math.rs@jadavpuruniversity.in}\\
    Kaustav Bose\\
    \small {Department of Mathematics, Jadavpur University, India}\\
    \small \emph{kaustavbose27@gmail.com}\\
    Buddhadeb Sau\\
    \small {Department of Mathematics, Jadavpur University, India}\\
    \small \emph{buddhadeb.sau@jadavpuruniversity.in}
    \\
    }
    \date{}
    
  \maketitle

\begin{abstract}
Consider a group of autonomous mobile computational entities, called agents, arbitrarily placed at some nodes of a dynamic but always connected ring. The agents neither have any knowledge about the size of the ring nor have a common notion of orientation. We consider the \textsc{Exploration} problem where the agents  have to collaboratively to explore the graph and terminate, with the requirement that each node has to be visited by at least one agent. It has been shown by Di Luna et al. [Distrib. Comput. 2020] that the problem is solvable by two anonymous agents if there is a single observably different node in the ring called landmark node. The problem is unsolvable by any number of anonymous agents in absence of a landmark node. We consider the problem with non-anonymous agents (agents with distinct identifiers) in a ring with no landmark node. The assumption of agents with distinct identifiers is strictly weaker than having a landmark node as the problem is unsolvable by two agents with distinct identifiers in absence of a landmark node. This setting has been recently studied by Mandal et al. [ALGOSENSORS 2020]. There it is shown that the problem is solvable in this setting by three agents assuming that they have edge crossing  detection  capability. Edge crossing  detection  capability is a strong assumption which enables two  agents  moving  in  opposite  directions  through  an  edge in  the  same  round  to  detect  each  other and also exchange information. In this paper we give an algorithm that solves the problem with three agents without the edge crossing  detection  capability. 


\end{abstract}
\section{Introduction}

Consider a team of autonomous computational entities, usually called agents or robots, located at the nodes of a graph. The agents are able to move from a node to any neighboring node. The \textsc{Exploration} problem asks for a distributed algorithm that allows the agents to explore the graph, with the requirement that each node has to be visited by at least one agent. Being one of the fundamental problems in the field of autonomous multi-agent systems, the problem has been extensively studied in the literature. However, the majority of existing literature studies the problem for \textit{static} graphs, i.e., the topology of the graph does not  change over time.  Recently within the distributed computing community, there has been a surge of interest in highly dynamic graphs: the topology of the graph changes continuously and unpredictably. In highly dynamic graphs, the topological changes are not seen as occasional anomalies (e.g., link failures, congestion, etc) but rather integral part of the nature of the system \cite{DBLP:conf/opodis/Santoro15,DBLP:journals/sigact/KuhnO11}. We refer the readers to \cite{CasteigtsFQS12} for a compendium of different models of dynamic networks considered in the literature.  If time is discrete, i.e., changes occur in rounds, then the evolution of a dynamic graph can be seen as a sequence of static graphs. A popular assumption in this context is \textit{always connected} (Class 9 of \cite{CasteigtsFQS12}), i.e., the graph is connected in each round.

In the dynamic setting, the  \textsc{Exploration} problem was first studied in \cite{dc/LunaDFS20}. In particular, the authors studied the \textsc{Exploration} problem in a dynamic but always connected ring by a set of autonomous agents. They showed that \textsc{Exploration} is solvable by two anonymous agents (agents do not have unique identifiers) under fully synchronous setting (i.e., all agents are active in each round) if there is a single observably different node in the ring called \textit{landmark} node. They also proved that in absence of a landmark node, two agents cannot solve \textsc{Exploration} even if the agents are non-anonymous and they have chirality, i.e., they agree on clockwise and counterclockwise orientation of the ring. The impossibility result holds even if we relax the problem to \textsc{Exploration} with partial termination. As opposed to the standard explicit termination setting where all agents are required to terminate, in the partial termination setting at least one agent is required to detect exploration and terminate.  If the agents are anonymous, then \textsc{Exploration} with partial termination with chirality remains unsolvable in absence of a landmark node even with arbitrary number of agents.  Then in \cite{MandalMM20}, the authors considered the \textsc{Exploration} problem (without chirality and requiring explicit termination) with no landmark node. Since the problem cannot be solved even with arbitrary number of anonymous agents, they considered non-anonymous agents, in particular, agents with unique identifiers. Since the problem is unsolvable by two non-anonymous agents, they considered the question that whether the problem can be solved by three non-anonymous agents. They showed that the answer is yes if the agents are endowed with edge crossing detection capability. Edge crossing detection capability is a strong assumption which enables two agents moving in opposite directions through an edge in the same round to detect each other and also exchange information. In collaborative tasks like exploration, the agents are often required to meet at a node and exchange information. However, the edge crossing detection capability allows two agents to exchange information even without meeting at a node. The assumption is particularly helpful when the agents do not have chirality where it is more difficult to ensure meeting. Even if we do not allow exchange of information, simple detection of the swap can be useful in deducing important information about the progress of an algorithm. In \cite{MandalMM20}, it was  also shown that the assumption of  edge  crossing detection  can be removed with the help of randomness. In particular, without assuming edge crossing detection capability, they gave a randomized algorithm that solves \textsc{Exploration} with explicit termination with probability at least $1 - \frac{1}{n}$ where $n$ is the size of the ring. Therefore this leaves the open question that whether the problem can be solved by a deterministic algorithm by three non-anonymous agents without edge crossing detection capability. In this paper, we answer this question affirmatively.


\subsection{Related Work}
The problem of \textsc{Exploration} by mobile agents in static anonymous graph has been studied extensively in the literature \cite{AlbersH00, DasFKNS07, DengP99, DieudonneP14, FlocchiniKMS09, FraigniaudIPPP05, PanaiteP99}. Prior to \cite{dc/LunaDFS20}, there have been a few works on \textsc{Exploration} of dynamic graphs, but under assumptions such as complete a priori knowledge of location and timing of topological changes (i.e., \textit{offline} setting) \cite{MichailS16,IlcinkasW13,Erlebach0K15,IlcinkasKW14} or periodic edges (edges appear periodically) \cite{FlocchiniMS13, IlcinkasW11} or $\delta$-recurrent edges (each edge appears at least once every $\delta$ rounds) \cite{IlcinkasW13} etc. In the \textit{online} or \textit{live} setting where the location and timing of the changes are unknown, distributed \textsc{Exploration} of graphs without any assumption other than being always connected was first considered in \cite{dc/LunaDFS20}. In particular, they considered the problem on an always connected dynamic ring. They proved that  without any knowledge of the size of the ring and without landmark node, \textsc{Exploration} with partial termination is impossible by two agents even if the agents are non-anonymous and have chirality. They also proved that if the agents are anonymous, have no knowledge of size, and there is no landmark node then   \textsc{Exploration} with partial termination is impossible by any number of agents even in the presence of chirality. On the positive side the authors showed that under fully synchronous setting, if an upper bound $N$ on the size of the ring is known to two anonymous agents, then  \textsc{Exploration} with explicit termination is possible within $3N-6$ rounds. They then showed that for two anonymous agents, if chirality and a landmark node is present, then exploration with explicit termination is possible within $O(n)$ round, and in the absence of chirality with all other conditions remaining the same,  \textsc{Exploration} with explicit termination is possible within $O(n \log n)$ rounds, where $n$ is the size of the ring. They have also proved a number of results in the semi-synchronous setting (i.e., not all agents may be active in each round) under different assumptions. Then in \cite{MandalMM20}, the authors considered agents with unique identifiers and edge crossing detection capability in a ring without any landmark node. They showed that \textsc{Exploration} with explicit termination is impossible in the absence of landmark node and the knowledge of $n$ by two agents with access to randomness, even in the presence of chirality, unique identifiers and edge-crossing detection capability. In the absence of randomness even \textsc{Exploration} with partial termination is impossible in the same setting.  With three agents under fully synchronous setting, the authors showed that \textsc{Exploration} with explicit termination is possible  by three non-anonymous agents with edge-crossing detection capability in absence of any landmark node. Removing the assumption of edge-crossing detection and replacing it with access to randomness, the authors gave a randomized algorithm for \textsc{Exploration} with explicit termination with success probability at least $1 - \frac{1}{n}$.  \textsc{Exploration} of an always connected dynamic torus was considered in \cite{DBLP:conf/icdcs/GotohSOKM18}. 
In \cite{GOTOH20211} the problem of \textsc{Perpetual Exploration} (i.e., every node is to be visited infinitely often ) was studied in temporally connected (i.e., may not be always connected but connected over time) graphs. Other problems studied in dynamic graphs include \textsc{Gathering} \cite{LunaFPPSV20, BournatDP18, OoshitaD18}, \textsc{Dispersion} \cite{KshemkalyaniMS20, AgarwallaAMKS18}, \textsc{Patrolling} \cite{DBLP:conf/sofsem/DasLG19} etc.

\subsection{Our Results}

We consider a dynamic but always connected ring of size $n$. A team of three agents are operating in the ring under a fully synchronous scheduler. Each agent has a $k$ bit unique identifier. The agents do not have any knowledge of $n$ and they do not have chirality. Furthermore, they do not have edge crossing detection capability. In this setting, we give a deterministic algorithm for \textsc{Exploration} with explicit termination. The algorithm solves the problem in $O(k^2 2^k n)$ rounds. As a subroutine, we also solve the problem \textsc{Meeting} where any two agents in the team are required to meet each other at a node. Another basic ingredient of our approach is an algorithm for the \textsc{Contiguous Agreement} problem which requires that the agents have to agree on some common direction for some number of consecutive rounds. These problems may be of independent interest and useful for solving other problems is similar settings. A comparison of the results obtained in this paper with previous works is given in Table  \ref{tab}.

\begin{table}[h]
\small
\centering
  \begin{tabular}{| m{1cm} | m{2cm} | m{2.2cm} | m{1.8cm} | m{1.8cm} | m{2.4cm} |}
    \hline
    Paper & \makecell{Number \\of agents} & \makecell{Agents} &  \makecell{Landmark\\node} & \makecell{Edge cross.\\ detection} & \makecell{Algorithm}\\ \hline\hline
    \makecell{\cite{dc/LunaDFS20}} & \makecell{2 } & \makecell{Anonymous}  & \makecell{Yes} & \makecell{No} & \makecell{Deterministic} \\ \hline
    \makecell{\cite{MandalMM20}} & \makecell{3 } & \makecell{Have unique\\ identifiers} &  \makecell{No} & \makecell{Yes} & \makecell{Deterministic} \\ \hline
    \makecell{\cite{MandalMM20}} & \makecell{3 } & \makecell{Have unique\\ identifiers} & \makecell{No} & \makecell{No} & \makecell{Randomized} \\ \hline
    \makecell{This \\ paper} & \makecell{3 } & \makecell{Have unique\\ identifiers}  & \makecell{No} & \makecell{No} & \makecell{Deterministic} \\ \hline
  \end{tabular}
    \caption{Comparison of our results with previous works.}\label{tab}
\end{table}

\subsection{Outline of the Paper}
In Section \ref{secmodel}, we describe the model and terminology used in the paper. In Section \ref{Exprwc}, we give an algorithm for \textsc{Exploration} in the simpler setting where the agents have chirality. In Section \ref{Secwoch}, we use the techniques used in Section \ref{Exprwc} to give an algorithm for \textsc{Exploration} in the absence of chirality.

\section{Model and Terminology}\label{secmodel}

We consider a dynamic ring of size $n$. All nodes of the ring are identical. Each node is connected to its two neighbors via distinctly labeled ports. The labeling of the ports may not be globally
consistent and thus might not provide an orientation. We consider a discrete temporal model i.e., time progresses in rounds. In each round at most one edge of the ring may be missing. Thus the ring is connected in each round. Such a network is known in the literature as a 1-interval connected ring. 

We consider a team of three agents operating in the ring. The agents do not have any knowledge of the size of the ring. Each agent is provided with memory and computational capabilities. An agent can move from one node to a neighbouring node if the edge between them is not missing. Two agents moving in opposite direction on the same edge are not able to detect each other. An agent can only detect an active agent co-located at the same node i.e., if an agent terminates it becomes undetectable by any other agent. Two agents can communicate with each other only when they are present at the same node. Each agent has a unique identifier which is a bit string of length $k > 1$. The length $k$ of the identifier is the same for each agent. For an agent $r$, its unique identifier will be denoted by $r.ID$. Also $val(r.ID)$ will denote the numerical value of $r.ID$. For example $val(00110) = 6$, $val(10011) = 19$, etc. Hence for any agent $r$, $val(r.ID) < 2^{k}$. 

Each agent has a consistent private orientation of the ring, i.e., a consistent notion of left or right. If the left and right of all three agents are the same then we say that the agents have chirality. By clockwise and counterclockwise we shall refer to the orientations of the ring in the usual sense. These terms will be used only for the purpose of description and the agents are unaware of any such global orientation if they do not have chirality. For two agents $r_1$ and $r_2$ on the ring, $d^{\circlearrowright}(r_1, r_2)$ and $d^{\circlearrowleft}(r_1, r_2)$ denotes respectively the clockwise and counterclockwise distance  from $r_1$ to  $r_2$.



We consider a fully synchronous system, i.e., all three agents are active in each round. In each round,  the agents perform the following sequence of operations:

\begin{description}

    \item [\textsc{Look}:]  If other agents are present at the node, then the agent exchanges messages with them.
    
    \item [\textsc{Compute}:] Based on its local observation, memory and received messages, the agent performs some local computations and determines whether to move or not, and if yes, then in which direction. 
    
    \item [\textsc{Move}:] If the agent has decided to move in the \textsc{Compute} phase, then the agent attempts to move in the corresponding direction. It will be able to move only if the corresponding edge is not missing. An agent can detect if it has failed to move.
\end{description}

During the execution of algorithm, two agents can meet each other in two possible ways: (1) two agents $r_1$ and $r_2$ moving in opposite direction come to the same node, or, (2) an agent $r_1$ comes to a node where there is a stationary agent $r_2$. In the second case we say that $r_1$ \textit{catches} $r_2$. If two agents $r_1$, $r_2$ are moving in opposite direction on the same edge in the same round, then we say that $r_1$ and $r_2$ \textit{swaps over an edge}.

\section{Exploration by Agents with Chirality}\label{Exprwc}

In this section, we shall assume that the agents have chirality. Since the agents have agreement in direction we shall use the terms clockwise and counterclockwise instead of right and left respectively. In Section \ref{MeetChiral}  we present an algorithm for \textsc{Meeting} where at least two agents are required to meet at a node. Then in Section \ref{ExplireChiral} we shall use this algorithm as a subroutine to solve \textsc{Exploration}.

\subsection{Meeting by Agents with Chirality}\label{MeetChiral}

We have three agents placed arbitrarily at distinct nodes of the ring. Our objective is that at least two of the agents should meet. The algorithm works in several phases. The lengths of the phases are $2^{j+k}$, $j=0,1,2, \ldots$.  In phase $j$, an agent $r$ tries to move clockwise for the first $val(r.ID)2^{j}$ rounds, and then remains stationary for $(2^{k}-val(r.ID)).2^{j}$ rounds. 


We shall prove the correctness of the algorithm in Theorem  \ref{thm: meet chiral}. Before that we shall prove a lemma. This lemma will be used several times in the proofs throughout the paper.

\begin{lemma}\label{lemma main}
Let $r_1, r_2$ and $r_3$ be three agents in the ring such that at round $t$, $0 \leq d^{\circlearrowright}(r_1, r_3) < d^{\circlearrowright}(r_1, r_2)$. If $r_1$  remains static and both $r_2$ and $r_3$ try to move clockwise for the next $2n$ rounds,  then within these $2n$ rounds either $r_2$ meets $r_1$ or $r_3$ meets $r_2$.     
\end{lemma}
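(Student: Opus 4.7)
The plan is a proof by contradiction: assume that over the $2n$ rounds $r_2$ never catches $r_1$ and $r_3$ never catches $r_2$, then derive a numerical impossibility. Throughout, I would track the two clockwise gaps $a(t)=d^{\circlearrowright}(r_3,r_2)$ and $b(t)=d^{\circlearrowright}(r_2,r_1)$. The hypothesis gives the clockwise cyclic order $r_1,r_3,r_2$, so $a(0)\geq 1$, $b(0)\geq 1$, $a(0)+b(0)=n-d^{\circlearrowright}(r_1,r_3)\leq n$, and moreover $b(0)\leq n-1$ since $r_1\neq r_2$.

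The key structural observation is that the two edges that can block the clockwise moves of $r_2$ and $r_3$ — namely the edges clockwise of $r_2$ and of $r_3$ — are distinct (because $r_2\neq r_3$), and at most one edge of the ring is missing per round. Hence each round falls into exactly one of three cases: (A) both agents step clockwise; (B) only $r_3$ moves (the edge clockwise of $r_2$ is missing); (C) only $r_2$ moves (the edge clockwise of $r_3$ is missing). Let $n_A, n_B, n_C$ denote the counts, so $n_A+n_B+n_C=2n$. Since $r_2$ and $r_3$ both move clockwise they cannot swap past each other, and under the no-catch assumption the cyclic order $r_1,r_3,r_2$ is preserved throughout. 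A direct case check then gives: in case (A), $a$ is unchanged and $b$ drops by $1$; in case (B), $a$ drops by $1$ and $b$ is unchanged; in case (C), $a$ increases by $1$ and $b$ drops by $1$. Aggregating, $a(2n)=a(0)+n_C-n_B$ and $b(2n)=b(0)-n_A-n_C$.

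The no-catch assumption forces both final values to be at least $1$. From $b(0)-n_A-n_C\geq 1$ and $n_A+n_B+n_C=2n$ one obtains $n_B\geq 2n-b(0)+1$. Substituting into $a(0)+n_C-n_B\geq 1$ yields $n_C\geq 2n-a(0)-b(0)+2$. Combining with $n_C\leq n_A+n_C\leq b(0)-1$ gives $a(0)+2b(0)\geq 2n+3$, which contradicts $a(0)+2b(0)=(a(0)+b(0))+b(0)\leq n+(n-1)=2n-1$.

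The main obstacle I expect is the case analysis of the per-round updates to $a$ and $b$: one must confirm that the two forward edges are always distinct, that no ``invisible'' swap of $r_2$ and $r_3$ occurs (ruled out by their moving in the same direction), and that the cyclic order $r_1,r_3,r_2$ persists for the full $2n$ rounds as long as no catch has happened. Once this bookkeeping is in place, the contradiction is purely arithmetic.
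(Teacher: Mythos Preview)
Your proof is correct and follows essentially the same approach as the paper: both arguments track the two clockwise gaps $d^{\circlearrowleft}(r_1,r_2)$ and $d^{\circlearrowleft}(r_2,r_3)$ and exploit that at most one edge is missing per round, so whenever $r_2$ is blocked $r_3$ necessarily advances toward it. The paper presents this as a direct counting argument (if $r_2$ makes only $x'<x$ successful steps then $r_3$ gains at least $2n-x'>x'+y$ rounds, enough to close the gap), whereas you package the same bookkeeping as a three-case contradiction on $(n_A,n_B,n_C)$; the arithmetic is equivalent.
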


\begin{proof}

Assume that at round $t$, $d^{\circlearrowleft}(r_1, r_2) = x$ and $d^{\circlearrowleft}(r_2, r_3) = y$. We have $x + y \leq n$. Within the next $2n$ rounds, if $r_2$ is able to move clockwise for at least $x$ rounds, then it will meet $r_1$ and we are done. So assume that $r_2$ does not meet $r_1$. Suppose that $r_2$ succeeds to make a move $x' < x$ times in the next $2n$ rounds. This means that $r_2$ remains static for $2n - x'$ rounds. Therefore, $r_3$ moves clockwise in those $2n - x'$ rounds when $r_2$ is static. Hence, $d^{\circlearrowleft}(r_2, r_3)$ decreases in these rounds. Recall that initially we had $d^{\circlearrowleft}(r_2, r_3) = y$.   Also, in the $x'$ rounds when $r_2$ was able to move, $d^{\circlearrowleft}(r_2, r_3)$ may or may not have increased depending on whether $r_3$ respectively failed or succeeded to move in those rounds. Now notice that 
\begin{align*}
	2n-x' \geq{}& n + x + y - x'\hspace{1cm}&\text{(since $n \geq x+y$)}\\
		  >{}& n + y  &\text{(since $x > x'$)}\\
		  >{}& x' + y &\text{(since $n > x'$)}
\end{align*}

This implies that $r_3$ will catch $r_2$. This completes the proof that within the $2n$ rounds either $r_2$ meets $r_1$ or $r_3$ meets $r_2$.

\end{proof}

\begin{theorem}\label{thm: meet chiral}
The above algorithm solves \textsc{Meeting} for  three agents with chirality. The algorithm ensures that the meeting takes place within $2^{k+\lceil\log n\rceil + 2}$ rounds.
\end{theorem}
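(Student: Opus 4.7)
Let $v_1 < v_2 < v_3$ denote the numerical values of the three (distinct) identifiers, and let $r_1, r_2, r_3$ be the corresponding agents. I will argue that a meeting is forced by the end of phase $j^{\star}$, where $j^{\star}$ is the smallest non-negative integer with $2^{j^{\star}} \geq 2n$; since $j^{\star} = \lceil \log n\rceil + 1$, this will give the claimed bound.

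The key idea is to locate inside phase $j^{\star}$ a sub-interval in which exactly one of the three agents is stationary while the other two attempt to move clockwise, so that Lemma \ref{lemma main} may be invoked. Consider the rounds of phase $j^{\star}$ numbered $v_1 \cdot 2^{j^{\star}} + 1$ through $v_2 \cdot 2^{j^{\star}}$. Throughout these rounds $r_1$ has already completed its scheduled motion and remains static, while $r_2$ and $r_3$ are still in their motion periods. The length of the interval is $(v_2 - v_1) \cdot 2^{j^{\star}} \geq 2^{j^{\star}} \geq 2n$, which is exactly the window required by Lemma \ref{lemma main}.

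At the start of this sub-interval the three agents may be assumed to occupy three distinct nodes, for otherwise we are already done. Hence $d^{\circlearrowright}(r_1, r_2) \neq d^{\circlearrowright}(r_1, r_3)$, and, after possibly swapping the labels of $r_2$ and $r_3$, we have $d^{\circlearrowright}(r_1, r_3) < d^{\circlearrowright}(r_1, r_2)$. Lemma \ref{lemma main} then yields that within $2n$ rounds some pair of the agents meet. The only subtle point I anticipate is this relabeling step: Lemma \ref{lemma main} is phrased for a fixed clockwise ordering of the two movers, and one must observe that its conclusion (that some pair of the three agents meets) is symmetric in the two moving agents, so either ordering of $r_2, r_3$ is acceptable.

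For the time bound, the round at which meeting is guaranteed is at most
\[
\sum_{j=0}^{j^{\star}-1} 2^{k+j} \; + \; v_2 \cdot 2^{j^{\star}} \;=\; 2^{k}\bigl(2^{j^{\star}}-1\bigr) + v_2 \cdot 2^{j^{\star}},
\]
which, using $v_2 < 2^{k}$, is strictly less than $2 \cdot 2^{k+j^{\star}} = 2^{k+j^{\star}+1} \leq 2^{k+\lceil\log n\rceil + 2}$, as claimed.
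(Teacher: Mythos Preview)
Your proof is correct and follows essentially the same approach as the paper: identify the phase $p=j^{\star}=\lceil\log 2n\rceil$, observe that after the smallest-ID agent stops the other two continue moving clockwise for at least $2n$ more rounds, and invoke Lemma~\ref{lemma main}. The only cosmetic difference is that the paper avoids your relabeling step by directly naming $r_2$ to be the agent counterclockwise-closest to the stationary $r_1$, and it bounds the running time by the full length of phase $p$ rather than stopping at round $v_2\cdot 2^{j^{\star}}$, but both calculations yield the same bound $2^{k+\lceil\log n\rceil+2}$.
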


\begin{proof}
Let $p = \lceil\log 2n\rceil = \lceil \log n \rceil + 1$. Hence $p$ is the smallest positive integer such that $2^{p} \geq 2n$. If two agents have met before the $p$th phase, then we are done. If not, we show that a pair of agents is guaranteed to meet during the $p$th phase. Recall that in this phase, an agent $r$  should (attempt to) rotate clockwise for $val(r.ID)2^{p}$ rounds, and then remain stationary for the remaining $(2^k-val(r.ID))2^{p}$ rounds. Let $r_1$ be the first agent to come to rest in that phase, say at round $t$. Let $r_2$ be the agent closest to $r_1$ in the counterclockwise direction and $r_3$ be the third agent.  We have $ {val(r_2.ID)}2^{p} - {val(r_1.ID)}2^{p}  \geq 2n$ and $ {val(r_3.ID)}2^{p} - {val(r_1.ID)}2^{p}  \geq 2n$. This implies that both $r_2$ and $r_3$ attempts to move for at least $2n$ rounds after $r_1$ comes to rest at round $t$. By  Lemma \ref{lemma main}, either $r_2$ and $r_1$ meets, or $r_3$ and $r_2$ meets. So the meeting takes place within $2^k\sum_{i=0}^{p} 2^i < 2^{k+\lceil\log n\rceil + 2}$ rounds.
\end{proof}

\subsection{Exploration with Termination by Agents with Chirality}\label{ExplireChiral}

\begin{algorithm}
  \caption{The algorithm executed by an agent $r$ with state \texttt{search}}
  \begin{outline}

\1[\textbf{1.}] Until another agent is found,  execute the  algorithm for \textsc{Meeting} described in Section \ref{MeetChiral}.

 \1[\textbf{2.}] Upon the first meeting do the following. 
 
  \2[\textbf{2.1}] If two agents are encountered in the first meeting, then set $r.state \leftarrow$ \texttt{bounce} and move counterclockwise. Also initiate the counter $r.TTime$. Otherwise if there is exactly one agent $r'$ then do the following.

  \2[\textbf{2.2.}] If $r'.state = $ \texttt{search} and $r'.winner = False$ then do the following. If $val(r.ID) < val(r'.ID)$, then set $r.state \leftarrow$ \texttt{settled} and remain at the current node. Otherwise if $val(r.ID) > val(r'.ID)$, then set $r.winner \leftarrow True$ and keep moving clockwise until the next meeting. Also initiate a counter $SCount$ to count the number of successful steps since the first meeting with $r'$.
  
  \2[\textbf{2.3.}] If $r'.state = $ \texttt{search} and $r'.winner = True$,  then set $r.state \leftarrow$ \texttt{bounce} and move counterclockwise. Also initiate the counter $r.TTime$.
  
  
 \2[\textbf{2.4.}] If $r'.state = $ \texttt{settled}, then keep moving clockwise until the next meeting.

\1[\textbf{3.}] Upon any subsequent meeting do the following.

  \2[\textbf{3.1.}] If only one agent $r'$ is encountered with  $r'.state = $ \texttt{search}, then do the following.
  
    \3[\textbf{3.1.1.}] If $r'.winner = False$ and $r.RSize = \emptyset$, then set $r.state \leftarrow$ \texttt{forward} and move clockwise. Also initiate the counter $r.TTime$.
    
    \3[\textbf{3.1.2.}] If $r'.winner = True$ and $r'.RSize = \emptyset$, then set $r.state \leftarrow$ \texttt{bounce} and move counterclockwise. Also initiate the counter $r.TTime$. Otherwise, if $r'.winner = True$ and $r'.RSize \neq \emptyset$, then terminate.

   \2[\textbf{3.2.}] If only one agent $r'$ is encountered with $r'.state = $ \texttt{settled}, then do the following.
  
    \3[\textbf{3.2.1.}] If $r.winner = True$, then set $r.RSize \leftarrow  r.SCount = n$ (since $r'$ is encountered for the second time) and inform $r'$ about $n$. Keep moving clockwise for $2n$ more rounds and then terminate.
    
    \3[\textbf{3.2.2.}] If $r.winner = False$ and $r'.RSize \neq \emptyset$, then terminate.
  
   \2[\textbf{3.3}] If two agents are encountered then there must be an agent $r'$ with $state$ \texttt{search}. Then execute \textbf{3.1.1.} or \textbf{3.1.2.} whichever is applicable.

  
    

    
  
    
    
  
    
\end{outline}\label{Algor 1}
\end{algorithm}

\begin{algorithm}
  \caption{The algorithm executed by an agent $r$ with state \texttt{settled}}
  \begin{outline}
\renewcommand{\outlinei}{enumerate}

\1[\textbf{1.}] Do not move. Terminate on one of the following conditions.

    \2[\textbf{1.1.}] If the other two agents are present at the same time and one of them has $state$ \texttt{forward}, then terminate immediately.  

    \2[\textbf{1.2.}]  If an agent $r'$ is encountered such that $r'.SBound \neq \emptyset$, then terminate immediately.
    
    \2[\textbf{1.3.}] If $n$ is already known, i.e.,  $r.RSize \neq \emptyset$, and an agent $r'$ is encountered that does not know $n$, i.e., $r'.RSize = \emptyset$, then terminate immediately.
    
    \2[\textbf{1.4.}] If an agent $r'$ with state \texttt{search} informs $r'.RSize = n$, then terminate after $2n$ more rounds.
    
    \2[\textbf{1.5.}] If an agent $r'$ with state \texttt{forward} or \texttt{return} informs $r'.RSize = n$ and $r'.TTime$, then initiate counter $r.TTime$ with starting value $r.TTime \leftarrow r'.TTime$ and terminate after the round when $r.TTime = 16n$.

\end{outline}\label{Algor 2}
\end{algorithm}

\begin{algorithm}
  \caption{The algorithm executed by an agent $r$ with state \texttt{forward}}
  \begin{outline}
\renewcommand{\outlinei}{enumerate}

 
 \1[\textbf{1.}] Until a new meeting, keep moving clockwise. Maintain the counters $r.FSteps \leftarrow $ number of successful steps with state \texttt{forward} and $r.TTime \leftarrow $ number of rounds since the triple was formed.


 

\1[\textbf{2.}] If all three agents meet at a node then terminate immediately. Otherwise upon meeting an agent $r'$, do the following. 

    \2[\textbf{2.1.}] If $r.RSize = \emptyset  \wedge r'.RSize \neq \emptyset$ or $r.RSize \neq \emptyset  \wedge r'.RSize = \emptyset$ then terminate immediately. Otherwise, do the following.

    \2[\textbf{2.2.}] If $r'.state =$ \texttt{settled}, then infer $r.RSize \leftarrow r.SCount = n$ (since $r'$ is encountered for the second time) and inform $r'$ about $n$. Keep moving clockwise and terminate after the round when $r.TTime = 16n$.

    \2[\textbf{2.3.}] If $r'.state =$ \texttt{bounce}, set $r.SBound \leftarrow r.FSteps + r'.BSteps$. Then keep moving clockwise for $r.SBound$ rounds and then terminate. If the \texttt{settled} agent is met meanwhile, then terminate immediately.

    \2[\textbf{2.4.}] If $r'.state =$ \texttt{return}, then do the following.
    
        \3[\textbf{2.4.1.}] If $r$ catches $r'$, then set $r.SBound \leftarrow r.FSteps + r'.BSteps$. Then keep moving clockwise for $r.SBound$ rounds and then terminate. If the \texttt{settled} agent is met meanwhile, then terminate immediately.
        
        \3[\textbf{2.4.2.}] If $r'$ catches $r$ and $r'.Rsteps > 2r'.Bsteps$, then keep moving clockwise until the next meeting. Otherwise if $r'.Rsteps \leq 2r'.Bsteps$, then set $r.SBound \leftarrow r.FSteps + r'.BSteps + 1$. Then keep moving clockwise for $r.SBound$ rounds and then terminate. If the \texttt{settled} agent is met meanwhile, then terminate immediately.

\end{outline}\label{Algor 3}
\end{algorithm}

\begin{algorithm}
  \caption{The algorithm executed by an agent $r$ with state \texttt{bounce}}
  \begin{outline}
\renewcommand{\outlinei}{enumerate}

 
 \1[\textbf{1.}] Maintain the counters $r.BSteps \leftarrow $ number of successful steps with state \texttt{bounce} in the current run, $r.BBlocked \leftarrow $ number of unsuccessful attempts with state \texttt{bounce} in the current run and $r.TTime \leftarrow $ number of rounds since the triple was formed. Until a new meeting or fulfillment of the condition $r.BBlocked > r.BSteps$, keep moving in the counterclockwise direction. If $r.BBlocked > r.BSteps$ is satisfied, change $r.state \leftarrow \texttt{return}$ and move clockwise.

\1[\textbf{2.}] If all three agents meet at a node then terminate immediately. Otherwise upon meeting any agent $r'$, do the following.

    \2[\textbf{2.1.}] If $r.RSize = \emptyset  \wedge r'.RSize \neq \emptyset$ or $r.RSize \neq \emptyset  \wedge r'.RSize = \emptyset$ then terminate immediately. Otherwise, do the following.

    \2[\textbf{2.2.}] If $r'.state =$ \texttt{settled} then do the following.
    
        \3[\textbf{2.2.1.}] If it is the first meeting with $r'$ in the same run, then keep moving counterclockwise until a new meeting or fulfillment of the condition $r.BBlocked > r.BSteps$. Also initiate a counter $SCount$ to count the number of successful steps since the first meeting with $r'$ in the current run.
        
        \3[\textbf{2.2.2.}] If it is the second meeting with $r'$ in the same run, then set $r.RSize \leftarrow r.SCount = n$. Change $r.state \leftarrow \texttt{return}$ and move clockwise.
        

    \2[\textbf{2.3.}] If $r'.state =$ \texttt{forward}, then set $r.SBound \leftarrow r'.FSteps + r.BSteps$. Then keep moving counterclockwise for $r.SBound$ rounds and then terminate. If the \texttt{settled} agent is met meanwhile, then terminate immediately.

\end{outline}\label{Algor 4}
\end{algorithm}

\begin{algorithm}
  \caption{The algorithm executed by an agent $r$ with state \texttt{\texttt{return}}}
  \begin{outline}
\renewcommand{\outlinei}{enumerate}

 
 \1[\textbf{1.}]  If the size of the ring is already known, i.e.,  $r.RSize = n$, then keep moving clockwise and terminate after the round when $r.TTime = 16n$. Otherwise, keep moving in the clockwise direction until a new meeting. Maintain the counters $r.RSteps \leftarrow $ number of successful steps with state \texttt{return} in the current run and $r.TTime \leftarrow $ number of rounds since the triple was formed.

\1[\textbf{2.}] If all three agents meet at a node then terminate immediately. Otherwise upon meeting any agent $r'$, do the following.

    \2[\textbf{2.1.}] If $r.RSize = \emptyset  \wedge r'.RSize \neq \emptyset$ or $r.RSize \neq \emptyset  \wedge r'.RSize = \emptyset$ then terminate immediately. Otherwise, do the following.

    
        

    \2[\textbf{2.2.}] If $r'.state =$ \texttt{forward}, then do the following. 
        
        \3[\textbf{2.2.1.}] If $r'$ catches $r$, then set  $r.SBound \leftarrow r.FSteps + r'.BSteps$. Then move counterclockwise for $r.SBound$ rounds and then terminate. If the \texttt{settled} agent is met meanwhile, then terminate immediately.
        
         \3[\textbf{2.2.2.}] If $r$ catches $r'$ then do the following.

            \4[\textbf{2.2.2.1.}] If $r.RSteps \leq 2r.BSteps$ then set  $r.SBound \leftarrow r.FSteps + r'.BSteps +1$. Then move counterclockwise for $r.SBound$ rounds and then terminate. If the \texttt{settled} agent is met meanwhile, then terminate immediately.
         
            \4[\textbf{2.2.2.2.}] Otherwise, if $r.RSteps > 2r.BSteps$, then change $r.state \leftarrow \texttt{bounce}$ and move counterclockwise.






\end{outline}\label{Algor 5}
\end{algorithm}

\subsubsection{Description of the Algorithm}

We consider three agents in the ring having chirality. For simplicity assume that the agents are  initially placed at distinct nodes of the ring. We shall later remove this assumption. Our plan is to first bring two of the agents at the same node using the \textsc{Meeting} algorithm described in Section \ref{MeetChiral}. Then one of them will settle  at that node and play the role of landmark node. Then the situation reduces to a setting  similar to  \cite{dc/LunaDFS20}. However we cannot use the same algorithm from \cite{dc/LunaDFS20}  in our case. This is because unlike in \cite{dc/LunaDFS20} we have to ensure that the agent acting as landmark also terminates. However our algorithm uses some ideas from \cite{dc/LunaDFS20}. We now describe our algorithm in the following. The detailed pseudocode description of the algorithms are given in Algorithm \ref{Algor 1}, \ref{Algor 2}, \ref{Algor 3}, \ref{Algor 4} and \ref{Algor 5}.

Initially all the agents start with their $state$ variable set to \texttt{search}.  Until an agent meets another agent, it executes the \textsc{Meeting} algorithm described in Section \ref{MeetChiral}.  Now according to Theorem \ref{thm: meet chiral},  two agents are guaranteed to meet within $2^{k+\lceil\log 2n\rceil + 2}$ rounds from the start of the algorithm. On meeting the agents compare their IDs and the one with smaller ID changes its $state$ to \texttt{settled} and stops moving. The other agent changes its $winner$ variable to $True$ and henceforth abandons its phase-wise movement and attempts to move  clockwise in each round.



Let us now describe the case when an agent with $state$ \texttt{search} meets the \texttt{settled} agent. If an agent  with $winner = False$ encounters the \texttt{settled} agent it also abandons its phase-wise movement and henceforth tries to move in the clockwise direction in every round. If an agent  with $winner = True$ meets the \texttt{settled} agent $r$, then it indicates that it is meeting the \texttt{settled} agent for the second time and hence all  nodes of the ring have been explored. The agent can also calculate the size of the ring as it is equal to the number of successful moves between the two meetings. The agent assigns this value to the variable $RSize$ and also informs the \texttt{settled} agent about it. Then the agent will continue to move in the clockwise direction for $2n$ more rounds. Both these agents will terminate after the completion of these $2n$ rounds. 

Now consider the case when an agent with $state = \texttt{search}$ and $winner = True$ meets an agent with $state = \texttt{search}$ and $winner = False$. If the agent with $winner = True$ already knows $n$, i.e., it has visited the \texttt{settled} agent twice, then both of them terminates immediately. If the agent with $winner = True$ does not already know $n$, then it changes its $state$ to \texttt{forward} and continues to move in the clockwise direction every round. On the other hand, the agent with $winner = False$ changes its $state$ to \texttt{bounce} and starts moving in the counterclockwise direction. This phenomenon is called the formation of \texttt{settled}-\texttt{forward}-\texttt{bounce} triplet. In this case, both the agents initiates a variable $TTime$ to keep track of the number of rounds elapsed after triplet formation.

After the triplet is created, the agent with $state$ \texttt{forward} will continue to move in clockwise direction. The agent with $state$ \texttt{bounce} will move counterclockwise and then on fulfillment of certain conditions, it may change its $state$ to \texttt{return} and start moving clockwise. Then it may again change its $state$ to \texttt{bounce} and start moving counterclockwise. The period between any two such $state$ changes will be called a $run$. While moving in the clockwise direction with $state$ \texttt{forward}, the agent keeps count of the number of successful steps with state \texttt{forward} in the variable $FSteps$. The variable $BSteps$ (resp. $RSteps$) is used to keep count of the number of successful steps with state \texttt{bounce} (resp. \texttt{return}) in the current run. Also while moving in the counterclockwise direction with $state$ \texttt{bounce}, the variable $BBlocked$ counts the number of unsuccessful attempts to move in that run.


An agent $r$ with $state$ \texttt{bounce} will change its $state$ to \texttt{return} if one of the following takes place: 1)  $r.BBlocked$ exceeds $r.BSteps$ or 2) the agent $r$ encounters  the \texttt{settled} agent  twice in the same run. An agent $r$ with $state$ \texttt{return} will change its $state$ to \texttt{bounce} if $r$ meets with the agent with $state$ \texttt{forward} and $r.RSteps > 2r.BSteps$, where $BSteps$ was counted  in the last run with $state$ \texttt{bounce}. 

Here the main idea is that  the agents will try to gauge the size of the ring. An agent may be able to  find the size $n$  exactly or calculate an upper bound of $n$. An agent can exactly find $n$ only if it visits the static \texttt{settled} agent twice in the same direction. In this case it will also inform the \texttt{settled} agent about $n$. Clearly when this happens the ring has been explored completely. However the two agents cannot terminate immediately because the third agent is not aware of this. So the agents will remain active till $TTime = 16n$, i.e., $16n$ rounds from the time when the triplet was created. It should be noted here that the \texttt{settled} agent initially did not know the time when the triplet was created. It came to know about this from the $TTime$ value of some agent that it met and initiated its own $TTime$ counter accordingly. Now it can be shown that within these $16n$ rounds the third agent will meet one of the two agents that already know $n$. These two agents will terminate immediately upon meeting. Now consider the case where an agent is able to find an upper bound of $n$. This happens when one of the following three takes place: 1) the \texttt{forward} agent  meets the agent  with $state$ \texttt{bounce}, 2) the \texttt{forward} agent catches the agent with $state$ \texttt{return}, 3) the agent with $state$ \texttt{return} catches the \texttt{forward} agent with $RSteps \leq 2BSteps$. It can be shown in each of the cases, these two agents will be able to correctly calculate an upper bound $SBound$ of $n$. Furthermore these cases imply that the ring has been already explored completely. However the two agents cannot terminate immediately because the \texttt{settled} agent is not aware of this. Therefore in order to acknowledge the \texttt{settled} agent, these two agents will start moving in opposite directions for $SBound$ more rounds and then terminate. Clearly one of them will be able to meet the \texttt{settled} agent.

\subsubsection{Correctness}

\begin{lemma}\label{lemma T_1}
There exists a round $T_1 \leq$ $2^{k+\lceil\log n\rceil + 2}$ when two of the agents with state \texttt{search} meet and a \texttt{settled} agent is created.
\end{lemma}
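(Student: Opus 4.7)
The plan is to read the bound straight off Theorem \ref{thm: meet chiral} and then follow the state transitions prescribed by Algorithm \ref{Algor 1} at the first meeting. All three agents start with $state = \texttt{search}$ and, per step 1 of Algorithm \ref{Algor 1}, execute the \textsc{Meeting} routine of Section \ref{MeetChiral} until another agent is encountered. By Theorem \ref{thm: meet chiral}, two of them meet by some round $T_1 \leq 2^{k+\lceil\log n\rceil + 2}$. Since no prior meeting has occurred, every agent participating in this encounter still satisfies $state = \texttt{search}$ and $winner = False$.

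Next I would trace which branch of step 2 of Algorithm \ref{Algor 1} is triggered at $T_1$. The argument of Theorem \ref{thm: meet chiral} is constructive and invokes Lemma \ref{lemma main}, which produces a pairwise encounter (either $r_2$ catches the stationary $r_1$, or $r_3$ catches $r_2$). Hence at round $T_1$ only two agents are colocated, so the encountering agent executes branch 2.2 rather than branch 2.1. Because the $k$-bit identifiers are distinct, exactly one of the two agents has the strictly smaller $val$; that agent sets $state \leftarrow \texttt{settled}$, producing the required \texttt{settled} agent at round $T_1 \leq 2^{k+\lceil\log n\rceil + 2}$.

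The only subtlety I anticipate is ruling out the edge case in which the third agent happens to arrive at the same node in the same round, which would route every participant into branch 2.1 (transition to \texttt{bounce}) and thus prevent settlement. I would handle this by noting that the round exhibited by Lemma \ref{lemma main} is, by construction, a two-agent encounter, so choosing $T_1$ to be this round avoids any three-way collision. Apart from this bookkeeping, no further estimates are required and the statement follows immediately from the time bound already established in Theorem \ref{thm: meet chiral}.
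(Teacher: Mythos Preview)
Your proposal is correct and follows the same route as the paper: the paper's proof is a one-line appeal to Theorem~\ref{thm: meet chiral}, and your first paragraph reproduces exactly that. The additional bookkeeping you supply about Algorithm~\ref{Algor 1}, step~2.2 is sound and simply makes explicit what the paper leaves implicit.

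One small correction on your handling of the three-way collision: you cannot ``choose $T_1$ to be this round'' to sidestep the problem, since $T_1$ is fixed by the execution (the first round any encounter occurs), not selected by the proof; and Lemma~\ref{lemma main} does not in fact rule out all three agents landing on the same node simultaneously. The paper's own proof simply ignores this edge case, so your argument is already at least as careful as the original; but the justification you give for dismissing it does not hold as written.
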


\begin{proof}
This follows from Theorem \ref{thm: meet chiral} since the agents execute the algorithm from Section \ref{MeetChiral} until they meet another agent.
\end{proof}

\begin{lemma}\label{lemma T2}
Suppose that $r_1$ and $r_2$ meets at round $T_1$ and $r_1$ becomes \texttt{settled}. There is a round $T_2$, with $T_1 < T_2 <  2^{k+\lceil\log n\rceil + 3}$,  when the third agent $r_3$ meets either $r_1$ or $r_2$.
\end{lemma}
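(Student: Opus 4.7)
The plan is to exploit the long moving window that phase $p+1$ of the meeting schedule provides, where $p = \lceil \log 2n \rceil = \lceil \log n \rceil + 1$. By Theorem~\ref{thm: meet chiral}, $T_1 \leq 2^{k + \lceil \log n \rceil + 2}$; this is essentially the cumulative length of phases $0, 1, \ldots, p$. Hence phase $p+1$ begins at some round $T^{\star} \leq 2^{k + \lceil \log n \rceil + 2}$, and during its first $2^{p+1} \geq 4n$ rounds the third agent $r_3$, still in state \texttt{search}, attempts to move clockwise every round. During the same window $r_2$ tries to move clockwise every round (its $winner$ flag having been set at $T_1$), while $r_1$ is permanently stationary.

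The main step is to apply Lemma~\ref{lemma main} over the first $2n$ rounds of phase $p+1$, taking $r_1$ as the static agent. At round $T^{\star}$ the clockwise ordering of the three agents is either $r_1, r_2, r_3$ or $r_1, r_3, r_2$. In the first ordering, $d^{\circlearrowright}(r_1, r_2) < d^{\circlearrowright}(r_1, r_3)$, so with the roles of $r_2$ and $r_3$ swapped to match the lemma's hypothesis, the lemma guarantees within $2n$ rounds that $r_3$ reaches $r_1$ by wrap-around or $r_2$ catches $r_3$; both outcomes involve $r_3$ and suffice. In the second ordering, the lemma applies directly and yields either $r_3$ catching $r_2$ (done) or $r_2$ meeting $r_1$.

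The only obstacle is this last outcome, in which $r_2$ wraps around and meets $r_1$ at some round $t^{\star} \in [T^{\star}, T^{\star} + 2n)$ without $r_3$ being present. In this case, Step 3.2.1 of Algorithm~\ref{Algor 1} instructs $r_2$ to continue moving clockwise for another $2n$ rounds. Since $t^{\star} + 2n \leq T^{\star} + 4n$ still lies within $r_3$'s moving portion of phase $p+1$, I reapply the argument of Lemma~\ref{lemma main} starting from $t^{\star}$: $r_1$ and $r_2$ are now co-located while $r_3$ is at some clockwise distance $d \in [1, n-1]$, and the same adversary-counting argument (the adversary can block at most one of the in-front edges of $r_2, r_3$ per round, so the combined number of their successful clockwise steps over $2n$ rounds is at least $2n$) forces either $r_2$ to catch $r_3$ or $r_3$ to wrap around and meet $r_1$ within these $2n$ further rounds.

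Summing, the required meeting occurs by round $T^{\star} + 4n \leq 2^{k + \lceil \log n \rceil + 2} + 2^{\lceil \log n \rceil + 2} < 2^{k + \lceil \log n \rceil + 3}$. The chief obstacle is the possibility that the first application of Lemma~\ref{lemma main} returns only a meeting of $r_2$ with $r_1$ that does not involve $r_3$; a second invocation over the following $2n$ rounds, made possible by the length of $r_3$'s moving window in phase $p+1$, handles this and explains why the bound here doubles relative to Theorem~\ref{thm: meet chiral}.
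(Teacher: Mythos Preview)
Your proof is correct and follows essentially the same approach as the paper: wait until a phase in which $r_3$'s clockwise-moving window is long enough (at least $2n$ rounds), then invoke Lemma~\ref{lemma main} with the settled agent $r_1$ playing the static role. Your argument is in fact more careful than the paper's in one respect: you explicitly split on the two possible clockwise orderings of $r_2,r_3$ relative to $r_1$ and handle the case where Lemma~\ref{lemma main} first yields only the meeting $r_2\!-\!r_1$ by reapplying the lemma over the next $2n$ rounds, whereas the paper's proof asserts directly that the lemma gives ``$r_2$ meets $r_3$ or $r_3$ meets $r_1$'' without separating the orderings.
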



\begin{proof}
At round $T_1$, $r_1$ and $r_2$ meet and $r_1$ becomes \texttt{settled}. After $T_1$, $r_2$ is trying to move clockwise in each round. On the other hand, since $r_3$ is still executing the \textsc{Meeting} algorithm, it will try to move clockwise on some rounds and on other rounds, will not try to move at all. Now if both $r_2$ and $r_3$ try to move clockwise for some $2n$ consecutive rounds  together, then by Lemma \ref{lemma main} either $r_2$ meets $r_3$ or $r_3$ meets $r_1$. Clearly, this is guaranteed to happen in any phase $l$, where $l \geq p = \lceil\log 2n\rceil$. Now suppose that $T_1$ belongs to the $j$th phase. By Lemma \ref{lemma T_1}, $j \leq p$. If $j < p$, then the required meeting should take place in or before the $p$th phase and if $j = p$, then the required meeting will take place in $p$th or $(p+1)$th phase. Therefore, we have $T_2 \leq \sum_{i=0}^{p+1} 2^{i+k} < 2^{k+\lceil\log n\rceil + 3}$.
\end{proof}




\begin{lemma}\label{lemma four}
Within $T_2 + 4n$ rounds either all three agents terminate or a \texttt{settled}-\texttt{forward}-\texttt{bounce} triple is created.
\end{lemma}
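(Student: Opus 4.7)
The plan is to trace the execution starting at round $T_2$ and perform a case analysis based on which agent $r_3$ first encounters, using Lemma~\ref{lemma main} as the main tool to bound the number of rounds between successive meetings. Throughout, I will keep in mind that $r_1$ is stationary at state \texttt{settled}, while $r_2$ has state \texttt{search} with $winner = True$ and has been trying to move clockwise in every round since $T_1$.

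Case 1: at round $T_2$, agent $r_3$ has its first meeting with $r_2$. If $r_2.RSize = \emptyset$ at this moment, then applying step 2.3 of Algorithm \ref{Algor 1} to $r_3$ and step 3.1.1 to $r_2$, the two agents simultaneously switch to states \texttt{bounce} and \texttt{forward}, so the \texttt{settled}-\texttt{forward}-\texttt{bounce} triple is created at round $T_2$ itself. The only subtle sub-case is when $r_2$ has already encountered $r_1$ for a second time before $T_2$ and therefore carries $r_2.RSize = n$; here $r_1$ and $r_2$ are already in their pre-termination countdown triggered by step 3.2.1 of Algorithm \ref{Algor 1} and step 1.4 of Algorithm \ref{Algor 2}, and I need to verify that $r_3$ (which becomes \texttt{bounce} by step 2.3) also terminates within $4n$ rounds, by arguing that it must revisit $r_1$ or catch $r_2$ before $r_1$ or $r_2$ actually terminates.

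Case 2: at round $T_2$, agent $r_3$ meets the \texttt{settled} agent $r_1$. Step 2.4 of Algorithm \ref{Algor 1} forces $r_3$ to move clockwise in every subsequent round. Now I apply Lemma~\ref{lemma main} with $r_1$ static and both $r_2$ and $r_3$ attempting to move clockwise: within $2n$ further rounds, either $r_3$ catches $r_2$ or $r_2$ catches $r_1$. In the first sub-case the analysis reduces to Case 1, producing the triple (or triggering termination). In the second sub-case, $r_2$ sets $r_2.RSize \leftarrow n$ via step 3.2.1 and informs $r_1$; a second application of Lemma~\ref{lemma main}, now with $r_1$ static, $r_2$ moving clockwise during its $2n$-round countdown, and $r_3$ still moving clockwise, shows that in another $2n$ rounds $r_3$ must meet $r_1$ or $r_2$, and by step 3.2.2 of Algorithm \ref{Algor 1} (together with step 1.3 of Algorithm \ref{Algor 2}) all three agents terminate. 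Adding the two budgets gives the bound $T_2 + 4n$.

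The hardest part is not the geometric argument (which is simply two back-to-back invocations of Lemma~\ref{lemma main}), but the careful verification of the sub-cases in which one or both of $r_1, r_2$ have already discovered $n$ before $T_2$. There one must check that the termination rules of Algorithms \ref{Algor 1} and \ref{Algor 2} actually force $r_3$ to terminate within the remaining portion of the $4n$-round window, rather than leaving it to wander into state \texttt{bounce} and continue indefinitely. This amounts to observing that after $r_2$ obtains $n$, the settled agent $r_1$ survives for at least $2n$ more rounds, during which Lemma~\ref{lemma main} guarantees $r_3$ encounters one of the two informed agents.
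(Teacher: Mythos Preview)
Your outline is essentially the paper's own proof: a two-way case split on whether $r_3$'s first meeting at round $T_2$ is with $r_2$ or with $r_1$, and in the second case two successive applications of Lemma~\ref{lemma main}. The paper handles Case~1 in a single line (``a \texttt{settled}-\texttt{forward}-\texttt{bounce} triple is created and we are done''), and Case~2 exactly as you describe.

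The extra sub-case you flag in Case~1 --- that $r_2$ might already have $r_2.RSize = n$ at round $T_2$ --- cannot occur, so the verification you say is ``the hardest part'' is unnecessary. The reason is a simple cyclic-order invariant: after $T_1$ the agents $r_2$ and $r_3$ move only clockwise (or stay put), and at $T_1$ we have $r_2$ at $r_1$'s node while $r_3$ is strictly ahead of $r_2$ in the clockwise direction. Tracking unwrapped clockwise displacements from $r_1$, the difference $(\text{position of }r_3)-(\text{position of }r_2)$ starts positive and changes by at most one per round. For $r_2$ to return to $r_1$ it must reach displacement $n$; at that moment either the difference has hit $0$ somewhere along the way (so $r_2$ and $r_3$ already met, giving $T_2$), or it is still positive, forcing $r_3$'s displacement to exceed $n$ (so $r_3$ has already passed through $r_1$, again giving $T_2$). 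Either way $T_2$ occurs no later than $r_2$'s second visit to $r_1$, hence $r_2.RSize = \emptyset$ at round $T_2$. With this observation your Case~1 collapses to the paper's one-line treatment, and the whole proof is just the clean Case~2 argument you wrote.
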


\begin{proof}
Recall that by Lemma \ref{lemma T2}, at round $T_2$, either $r_3$ meets  $r_1$ or $r_3$ meets $r_2$. In the latter case, a \texttt{settled}-\texttt{forward}-\texttt{bounce} triple is created and we are done. So consider the other case where $r_3$ meets the \texttt{settled} agent $r_1$. 
Before this meeting, $r_3$ was trying to move clockwise in some rounds, while in other rounds, it was not trying to move at all. After meeting $r_1$, $r_3$ will try moving clockwise in each round. Then by Lemma \ref{lemma main} within $2n$ rounds (i.e., within $T_2 + 2n$ rounds from the beginning), either $r_2$ meets $r_1$ again, or $r_3$ meets $r_2$.  In the latter case, a \texttt{settled}-\texttt{forward}-\texttt{bounce} triple is created, and we are done. So consider the other case where $r_2$ meets $r_1$ for the second time. Thus $r_2$ finds out $n$ (the size of the ring) and also informs $r_1$ about it. Then $r_2$ will keep moving clockwise for $2n$ more rounds and then terminate. Also $r_1$ will remain active for $2n$ more rounds and then terminate. Now within these $2n$ rounds, $r_3$ will meet either $r_1$ or $r_2$ and terminate immediately. Therefore,  within $T_2 + 4n$ rounds either all three agents terminate or a \texttt{settled}-\texttt{forward}-\texttt{bounce} triple is created.
\end{proof}

Suppose that at round $T_3 \leq T_2 + 4n$ a \texttt{settled}-\texttt{forward}-\texttt{bounce} triple is formed. We shall denote by $r_1$, $r_2$ and $r_3$ the agents with states \texttt{settled}, \texttt{forward} and \texttt{bounce} respectively. We shall say that the agents $r_2$ and $r_3$  \textit{agree on an upper bound of} $n$ if one of the following events occur at any round after $T_3$. We show in Lemma \ref{lemma events} that indeed if one of the following events take place then the agents can find an upper bound of $n$.

\begin{description}
\item[Event 1.] The agent $r_2$ with state \texttt{forward} and $r_3$ with state \texttt{bounce} meet each other at a node and neither of them know $n$.

\item[Event 2.] The agent $r_2$ with state \texttt{forward} catches $r_3$ with state \texttt{return} at a node and neither of them know $n$.

\item[Event 3.] The agent $r_3$ with state \texttt{return} catches $r_2$ with state \texttt{forward} at a node and $r_3.Rsteps \leq 2r_3.Bsteps$ and neither of them know $n$.

\end{description}


\begin{lemma}\label{lemma events}
If one of Event 1-3 takes place, then 

\begin{enumerate}
    \item exploration is complete,
    
    \item $r_2$ and $r_3$ can infer an upper bound $N$ of $n$ such that $n \leq N \leq 3n$ and will set it as $SBound$.
\end{enumerate}

\end{lemma}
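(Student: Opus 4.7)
I will handle the three events in turn; the common tool is to track the signed displacements of $r_2$ and $r_3$ from the triple-formation node $v_0$, and convert the geometric meeting condition into constraints on $FSteps+BSteps$.

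For Event 1, both $r_2$ and $r_3$ left $v_0$ at round $T_3$ moving in opposite directions. At the meeting, the displacements satisfy $FSteps+BSteps\equiv 0\pmod n$, and as both values are positive we obtain $FSteps+BSteps\geq n$, so $SBound\geq n$. For the upper bound, $r_2$ has not met $r_1$ since becoming \texttt{forward}, so $FSteps\leq n-1$; likewise $r_3$ has not met $r_1$ twice in its current \texttt{bounce} run, so $BSteps\leq 2n-2$, giving $SBound\leq 3n-3<3n$. Exploration follows because the joint clockwise arc $[v_0-BSteps,v_0+FSteps]$ has length $FSteps+BSteps+1\geq n+1$ and covers the ring.

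For Events 2 and 3, let $\tilde\Delta=P_3-P_2$ be the integer-valued signed difference of the cumulative displacements of $r_3$ and $r_2$ from $v_0$, so $\tilde\Delta=0$ at $T_3$. During bounce, $\tilde\Delta$ is monotone non-increasing and stays in $(-n,0]$ (else Event 1 would have fired), and at the switch $\tilde\Delta=-(F_1+BSteps)$ with $0<F_1+BSteps<n$, where $F_1$ is $r_2$'s forward-step count during bounce. During return, $\tilde\Delta$ changes by $\pm 1$ or $0$ per round; the first subsequent meeting occurs when $\tilde\Delta$ first equals a multiple of $n$. The two admissible values closest to $-(F_1+BSteps)\in(-n,0)$ are $0$ and $-n$: hitting $-n$ requires the decisive round to be a $-1$ step (only $r_2$ clockwise), which is exactly $r_2$ catching $r_3$ (Event 2), giving $RSteps=FSteps+BSteps-n\geq 0$ and therefore $SBound=FSteps+BSteps\geq n$; hitting $0$ requires a $+1$ step (only $r_3$ clockwise), which is $r_3$ catching $r_2$ (Event 3) and gives $RSteps=FSteps+BSteps$.

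For Event 3 one must additionally show $FSteps+BSteps\geq n-1$. The key observation is that while $r_2$ and $r_3$ are not at clockwise distance $1$ on the ring during bounce, the 1-interval-connected adversary cannot block both in the same round (their relevant edges differ), so whenever $r_3$ is blocked, $r_2$ advances. A purely non-adjacent bounce phase would therefore give $F_1\geq BBlocked>BSteps$, hence $FSteps\geq F_1>BSteps$ and $RSteps=FSteps+BSteps>2\,BSteps$, contradicting Event 3's guard. So at least one round has the two agents at clockwise distance $1$, which is equivalent to $F_1+BSteps\equiv n-1\pmod n$. Since $F_1+BSteps$ is monotone non-decreasing and avoids positive multiples of $n$ throughout bounce (no Event 1), it first reaches $n-1$ and stays $\geq n-1$ thereafter. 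Consequently $FSteps+BSteps\geq F_1+BSteps\geq n-1$, so $SBound=FSteps+BSteps+1\geq n$. The upper bound $SBound\leq 3n-2<3n$ and exploration completeness follow exactly as in Event 1, via the arc $[v_0-BSteps,v_0+FSteps]$ of length $\geq n$.

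The main obstacle is Event 3: the proof must couple the switching predicate $BBlocked>BSteps$, the catching guard $RSteps\leq 2\,BSteps$, and the one-missing-edge-per-round budget to force the clockwise-distance-$1$ configuration during bounce. Without this combined argument, $r_2$ and $r_3$ could meet well before $r_3$'s sweep covers the whole ring, and the inferred $SBound$ would fail to be a valid upper bound on $n$.
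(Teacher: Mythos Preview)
Your core insight for Event~3 --- that the switching guard $BBlocked>BSteps$ together with the catching guard $RSteps\le 2\,BSteps$ forces a round in which both agents are blocked on the same missing edge --- is exactly the paper's argument, and your handling of Events~1 and~2 reaches the right conclusions. However, two bookkeeping issues make several of your intermediate equations literally false even though the final inequalities survive.

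\textbf{First-run conflation.} You anchor everything at the triple-formation node $v_0$ and round $T_3$, but $BSteps$ and $RSteps$ are \emph{per-run} counters while $FSteps$ is cumulative since $T_3$. Events~1--3 can occur in any run, not just the first: after $r_3$ catches $r_2$ with $RSteps>2\,BSteps$, it bounces again from the new meeting node $u\neq v_0$. Your identities ``$FSteps+BSteps\equiv 0\pmod n$'' (Event~1), ``$RSteps=FSteps+BSteps-n$'' (Event~2), and ``$RSteps=FSteps+BSteps$'' (Event~3) hold only with $FSteps$ replaced by $FSteps-FSteps^{(\text{start of run})}$. The conclusions $SBound\ge n$ are still recoverable because $FSteps\ge FSteps-FSteps^{(\text{start of run})}$, but as written the equalities are wrong on later runs. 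The paper sidesteps this by arguing geometrically from the run's start node $u$ rather than from $v_0$.

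\textbf{Swap during bounce.} Your claim that $\tilde\Delta$ ``stays in $(-n,0]$ (else Event~1 would have fired)'' is false: if $r_2$ and $r_3$ swap over an edge during bounce, $F_1+BSteps$ jumps from $n-1$ to $n+1$, so $\tilde\Delta$ passes below $-n$ without any meeting (there is no edge-crossing detection). Your later Event~3 paragraph actually accommodates this correctly (``$F_1+BSteps$ is monotone non-decreasing and avoids positive multiples of $n$\ldots first reaches $n-1$ and stays $\ge n-1$''), but your Event~2 characterisation ``hitting $-n$ is $r_2$ catching $r_3$'' breaks: after a swap, $\tilde\Delta$ can hit $-n$ from \emph{below} via a $+1$ step, which is $r_3$ catching $r_2$, not Event~2. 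The paper handles the swap case for Event~3 separately (``if $r_2$ and $r_3$ swapped over an edge \ldots then clearly $SBound\ge n$'') and for Event~2 uses the direct observation $BSteps=d^{\circlearrowleft}(u,v)$, $FSteps\ge d^{\circlearrowright}(u,v)$, which avoids the $\tilde\Delta$ machinery entirely.

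In short: your argument is salvageable and the Event~3 kernel matches the paper, but you should either restrict the $\tilde\Delta$ analysis to a single run starting at $u$ (resetting $\tilde\Delta=0$ there and replacing $FSteps$ by the in-run count) or explicitly treat the swap case as the paper does.
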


\begin{proof}
\textbf{Event 1.} Consider a run of $r_3$, starting when it met $r_2$ and changed its state to \texttt{bounce} and ending when it met $r_2$ again (Event 1). Then it is easy to see that all nodes of the ring have been explored by $r_2$ and $r_3$.  Upon meeting, both $r_2$ and $r_3$ set $SBound = r_{2}.FSteps + r_3.BSteps$. Clearly $SBound \geq n$. Also since both $r_2$ and $r_3$ do not know $n$, $r_2.FSteps < n$  and $r_3.BSteps < 2n$. To see the first inequality observe that if $r_2. FSteps \geq n$ then it implies that $r_2$ with $state$ \texttt{forward} has met $r_1$. But recall that $r_2$ has already met $r_1$ before when it was in $state$ \texttt{search} and moving in the same direction. This implies that $r_2$ can infer $n$ by counting the number of successful steps since the first meeting. For the second inequality observe that if $r_3.BSteps \geq 2n$ then it implies $r_3$  has met $r_1$ twice in the same run and therefore can infer $n$ by counting the number of successful steps from the first and second meeting.  Hence $SBound < 3n$.

\textbf{Event 2.} Suppose that at some node $u$, $r_2$ and $r_3$ meet each other and continue moving in clockwise (with state \texttt{forward}) and counterclockwise (with state \texttt{bounce}) direction respectively. Then suppose that at node $v$, $r_3$ changes its state to \texttt{return} and its direction to clockwise. Then after sometime $r_2$ catches $r_3$ (Event 2). Clearly all nodes in the counterclockwise path from $u$ to $v$   have been visited by $r_3$ and all nodes in the clockwise path from $u$ to $v$ have been visited by $r_2$. Hence all nodes in the ring have been together explored by $r_2$ and $r_3$. Upon meeting, they both set $SBound = r_{2}.FSteps + r_3.BSteps$. Clearly $r_3.BSteps = d^{\circlearrowleft}(u, v)$ and $r_2.FStep \geq d^{\circlearrowright}(u, v)$. Hence $SBound \geq d^{\circlearrowright}(u, v) + d^{\circlearrowleft}(u, v) = n $. Also by previous argument $SBound < 3n$.

\textbf{Event 3.} Suppose that at some time $r_2$ and $r_3$ meet each other at node $u$ and continue moving in clockwise (with state \texttt{forward}) and counterclockwise (with state \texttt{bounce}) direction respectively. Then at some time $r_3$ changes its state to \texttt{return} (when we have $r_3.BBlocked = r_3.BSteps + 1$) and its direction to clockwise. Then after sometime it catches $r_2$ and finds that $r_3.RSteps \leq 2r_3.BSteps$ (Event 3). We show that this implies that exploration is complete.
If $r_2$ and $r_3$ swapped over an edge at some  round in between, then it implies that all nodes of the ring have been explored and we are done.  Otherwise we have  $r_3.RSteps = r_3.BSteps + r_2.FSteps$ at the time of meeting. Since $r_3.RSteps \leq 2r_3.BSteps$, we have $r_2.FSteps \leq r_3.BSteps$. Also recall that $r_3.BBlocked = r_3.BSteps + 1$. Now if $r_2$ had been able to successfully execute a move  during each of those rounds when $r_3$ was blocked with state \texttt{bounce}, we must have had $r_2.FSteps > r_3.BSteps$. But since $r_2.FSteps \leq r_3.BSteps$, $r_2$ with state \texttt{forward} and $r_3$ with state \texttt{bounce} must have been blocked at the same round. This can only happen if $r_2$ and $r_3$ are blocked on two ends of the same missing edge. This implies that the ring has been explored. Upon meeting both $r_2$ and $r_3$ set $SBound = r_{2}.FSteps + r_3.BSteps + 1$. If they had swapped over an edge then clearly $SBound \geq n$. Otherwise we showed that  $r_2$ and $r_3$ were blocked at two adjacent nodes of the ring, say  $v$ and $w$ respectively. Then $r_3.BSteps \geq d^{\circlearrowleft}(u, w)$ and $r_2.FStep \geq d^{\circlearrowright}(u, v)$. Hence $SBound \geq d^{\circlearrowright}(u, v) + d^{\circlearrowleft}(u, w) + 1= n$. Also by previous argument $SBound < 3n + 1 \leq 3n$.

\end{proof}




\begin{lemma}\label{lemma 7n}
Suppose that $r_2$ and $r_3$ meet at some round $T$ and $r_3$ changes its $state$ to \texttt{bounce}. If they do not meet again for $10n$ rounds then all three agents are guaranteed to find out $n$
\end{lemma}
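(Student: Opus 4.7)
I would proceed by contradiction and assume $r_2, r_3$ do not meet in $(T, T+10n]$, aiming to show that each of the three agents has learned the value $n$ by round $T+10n$. The key structural fact is that in each round at most one edge can be missing, and since $r_2$ (as \texttt{forward}) persistently attempts clockwise moves while $r_3$ (starting in \texttt{bounce}) initially attempts counterclockwise moves, the adversary can simultaneously block $r_2$ and $r_3$ only in the narrow swap configuration in which they sit on either end of the same missing edge; in that configuration either both stay (blocked) or both move across (swapping without meeting), so the configuration is self-consistent with the no-meeting assumption but does not stall the counters indefinitely.

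I would first restrict $r_3$'s state evolution using Algorithms~\ref{Algor 4} and \ref{Algor 5}: absent a meeting with $r_2$, the only transitions available to $r_3$ are \texttt{bounce}~$\to$~\texttt{return} via either (a) $BBlocked > BSteps$ or (b) a second visit to $r_1$ within the same run, and there is no way back. The central combinatorial step tracks $d^{\circlearrowleft}(r_2, r_3)$, which equals $0$ at round $T$ and grows each round by the number of $r_2, r_3$ that succeed in moving. A convergence argument in the spirit of Lemma~\ref{lemma main} shows that this distance cannot wrap around to $n$ (which would force a meeting) without event (b) firing first. Hence within $O(n)$ rounds of \texttt{bounce} either (b) fires---giving $r_3$ the value $n$ via its $SCount$ counter, which it immediately passes to $r_1$---or (a) fires and $r_3$ enters \texttt{return} not yet knowing $n$.

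Next I would show that $r_2$ must meet $r_1$ for a second time during the $10n$-round window and thereby read off $n$ from its $SCount$ counter (Algorithm~\ref{Algor 3} step~2.2). A charging argument against the adversary's one-edge-per-round budget yields this: persistently blocking $r_2$ frees $r_3$ to move, and then $r_3$ either wraps around the ring and meets $r_2$ (excluded by assumption) or itself reaches $r_1$ twice as above; both outcomes force the adversary to eventually release $r_2$, so cumulatively in $10n$ rounds $r_2$ accumulates at least $n$ successful clockwise moves and completes a clockwise lap to $r_1$. At that meeting $r_2$ acquires $n$ and, via Algorithm~\ref{Algor 2} step~1.5, informs $r_1$ as well (if $r_1$ was not already informed by $r_3$).

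The main obstacle I anticipate is the sub-case where (a) fires before (b), so $r_3$ enters \texttt{return} moving clockwise alongside $r_2$ without knowing $n$; the simple divergence argument for $d^{\circlearrowleft}(r_2, r_3)$ no longer applies, since both agents now attempt moves in the same direction. Here I would case-split on the relative clockwise order of $r_2, r_3$ at the moment of transition and argue that a clockwise-moving $r_3$ cannot avoid $r_2$ indefinitely without leaving $r_2$ enough free rounds to complete its clockwise lap past $r_1$. The delicate bookkeeping is to show that the $\sim 4n$ rounds spent by $r_3$ in \texttt{bounce} plus the $\sim 2n$ rounds needed by $r_2$ for its forward traversal plus the slack from the swap configuration all fit within the $10n$ budget.
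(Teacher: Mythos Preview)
Your overall direction is right—bound the time $r_3$ stays in \texttt{bounce}, then analyse what happens once it is in \texttt{return}—but you have inverted the difficulty of the two halves and overlooked the tool that makes the second half immediate.

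For the \texttt{bounce} phase the paper uses nothing like a wrap-around argument on $d^{\circlearrowleft}(r_2,r_3)$. It simply observes that after $4n$ rounds of \texttt{bounce} either $r_3.BSteps\ge 2n$ (so $r_3$ has passed $r_1$ twice and learns $n$) or $r_3.BBlocked\ge 2n>r_3.BSteps$, so the condition $BBlocked>BSteps$ has already fired. Either way $r_3$ is in \texttt{return} by some $T'\le T+4n$.

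The part you flag as ``the main obstacle'' is in fact the easy part. From $T'$ on, both $r_2$ and $r_3$ attempt to move clockwise while $r_1$ is static, which is exactly the hypothesis of Lemma~\ref{lemma main}. That lemma gives, within $2n$ rounds, either a meeting $r_2$–$r_3$ (excluded by assumption) or a meeting of one of them with $r_1$. A second application of Lemma~\ref{lemma main} (with the roles of the two moving agents swapped) then forces a second such meeting within $2n$ further rounds, and the case split on whether $r_2$ or $r_3$ reached $r_1$ first yields the $8n$/$10n$ totals. No ``delicate bookkeeping'' or order case-split is needed.

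Correspondingly, your separate charging argument that $r_2$ must complete a clockwise lap to $r_1$ is both unnecessary and, as stated, not sound: the claim that $r_3$ visiting $r_1$ twice ``forces the adversary to eventually release $r_2$'' has no mechanism behind it—the adversary is free to keep the edge in front of $r_2$ missing forever, and nothing about $r_3$'s knowledge of $n$ changes that. What actually rescues the argument is that $r_3$, once clockwise, will catch the stalled $r_2$ (again Lemma~\ref{lemma main}), contradicting the no-meeting assumption. So replace the charging heuristic by direct appeals to Lemma~\ref{lemma main} and the proof becomes short.
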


\begin{proof}
Assume that $r_2$ and $r_3$ do not meet for $10n$ rounds from $T$. Now at round $T$, $r_3$ changes its state to \texttt{bounce}.  By round $T + 4n$ either $r_3$ has made $2n$ successful moves in the counterclockwise direction or the condition $BBlocked > BSteps$ is fulfilled. In the former case $r_3$ finds out $n$ and changes its state to \texttt{return}. In the latter case also $r_3$ changes its state to \texttt{return}. Therefore within $4n$ rounds $r_3$ is guaranteed to start moving in clockwise direction, say at round $T' \leq T + 4n$ i.e., from $T'$ onwards both $r_2$ and $r_3$ are moving in the same direction. Therefore within $2n$ rounds  either $r_2$ and $r_3$ meet each other or one of them meets $r_1$. Since we assumed that $r_2$ and $r_3$ do not meet, the latter takes place. 

\textbf{Case 1:} Suppose that $r_1$ meets $r_2$. Then both $r_1$ and $r_2$ find out $n$ (if not already known). Within $2n$ rounds $r_3$ comes to $r_1$  and also find out $n$ (if not already known). So within $4n$ rounds from $T'$, i.e. $8n$ rounds from $T$ all three agents find out $n$.

\textbf{Case 2:} Suppose that $r_1$ meets $r_3$. Then within next $2n$ rounds  $r_2$ meets $r_1$. Then by the arguments of Case 1, all three agents will find out $n$ within next $2n$ rounds. Therefore, within  $6n$ rounds from $T'$, i.e., $10n$ rounds from $T$, all three agents find out $n$. 
\end{proof}

\begin{lemma}\label{sixteen}
Within $16n$ rounds after $T_3$, either all three agents find out $n$ or $r_2, r_3$ agree on an upper bound of $n$.
\end{lemma}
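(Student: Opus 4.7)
The plan is to proceed by contradiction: suppose that within $16n$ rounds of $T_3$, neither do all three agents learn $n$ nor do $r_2, r_3$ agree on an upper bound (equivalently, none of Events 1--3 occur). I would decompose the $16n$-round window into an initial $6n$-round phase in which $r_3$'s \texttt{bounce}-then-\texttt{return} behavior forces some meeting, and a subsequent $10n$-round phase governed by Lemma~\ref{lemma 7n}.

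For the first $6n$ rounds, starting at $T_3$ the agent $r_3$ is in state \texttt{bounce}. Reusing the opening argument of Lemma~\ref{lemma 7n}, within $4n$ rounds either $r_3$ completes $2n$ successful CCW moves and hence meets $r_1$ twice (so by Algorithm~\ref{Algor 4} step 2.2.2 it learns $n$ and switches to \texttt{return}), or $r_3.BBlocked > r_3.BSteps$ fires and $r_3$ switches to \texttt{return} directly; Event~1 is excluded by our standing hypothesis. Thus by round $T_3 + 4n$ both $r_2$ and $r_3$ are attempting CW moves while $r_1$ is static, so Lemma~\ref{lemma main} yields a meeting within $2n$ more rounds. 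I would case-split on this meeting: (i) $r_2$ catches $r_3$ is Event 2; (ii) $r_3$ catches $r_2$ with $r_3.RSteps \le 2\, r_3.BSteps$ is Event 3; (iii) $r_2$ or $r_3$ first meets $r_1$, in which case (following Cases 1 and 2 of Lemma~\ref{lemma 7n}) $n$ propagates to all three agents within $\le 4n$ further rounds via the \textsc{Look}-phase exchange and the mismatch-termination clauses of the algorithms; and (iv) $r_3$ catches $r_2$ with $r_3.RSteps > 2\, r_3.BSteps$, triggering step 2.2.2.2 of Algorithm~\ref{Algor 5} so that $r_3$ re-enters \texttt{bounce} at some round $T_4 \le T_3 + 6n$. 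Only case (iv) fails to give a direct contradiction with the hypothesis.

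For the second phase, starting at $T_4$, since $r_2, r_3$ just met and $r_3$ just changed to \texttt{bounce}, Lemma~\ref{lemma 7n} applies verbatim from $T_4$: if $r_2, r_3$ do not meet again within $10n$ rounds of $T_4$, all three agents learn $n$ by round $T_4 + 10n \le T_3 + 16n$, a contradiction. Otherwise the next $r_2$--$r_3$ meeting inside this window must be Event 1, Event 2, or Event 3 (each contradicting the hypothesis) or yet another firing of step 2.2.2.2.

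The main obstacle is ruling out repeated firings of step 2.2.2.2 inside the $10n$ follow-up window, since a naive repetition of Lemma~\ref{lemma 7n} would need a further full $10n$ budget and exceed our total of $16n$. I would address this via a geometric progress invariant: at every firing of step 2.2.2.2 the catching identity $r_2.FSteps \equiv R' - B' \pmod{n}$ holds for the ending run's counts, and the condition $R' > 2 B'$ forces $R' - B' > B' \ge 0$, so between consecutive firings the cumulative counter $r_2.FSteps$ strictly increases. Hence $r_2$ cannot undergo unboundedly many 2.2.2.2 events without having $r_2.FSteps$ reach $n$, at which point $r_2$ has necessarily met $r_1$ for the second time and we fall back into case (iii) above, which already yields the conclusion. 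Making this cumulative-progress count fit inside the $16n$ total (in the worst adversarial edge-removal schedule) is the technical crux that requires the most careful bookkeeping.
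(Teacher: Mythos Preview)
Your overall architecture---assume no Event~1--3 occurs and show that $r_2$ must meet $r_1$ (hence learn $n$) before time runs out---is the same as the paper's, and your progress invariant that $r_2.FSteps$ strictly increases between successive firings of step~2.2.2.2 is correct.  But the decomposition into ``first $6n$'' plus ``Lemma~\ref{lemma 7n} for the last $10n$'' does not close.  As you yourself note, a second firing of 2.2.2.2 inside the $10n$ window resets $r_3$ to \texttt{bounce}, and Lemma~\ref{lemma 7n} would then need a fresh $10n$ budget from \emph{that} moment; iterating this blows past $16n$.  Your invariant only gives $fs_i \ge 1$ per iteration, which is not enough to force $\sum fs_i \ge n$ within the time allotted.

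The missing quantitative ingredient is a bound relating elapsed time to $r_2$'s forward progress in each inter-meeting interval.  The paper lets $t_i$ be the time between the $i$th and $(i{+}1)$th meeting of $r_2,r_3$ and $fs_i, bs_i, rs_i$ the successful \texttt{forward}, \texttt{bounce}, \texttt{return} steps in that interval.  Under the standing assumption (no swap, meeting not of type Event~1--3), one has $rs_i = bs_i + fs_i$, the \texttt{bounce} sub-run lasts at most $2bs_i+1$ rounds, and the \texttt{return} sub-run at most $fs_i + rs_i$ rounds, giving $t_i \le 3bs_i + 2fs_i + 1$; combining with $rs_i > 2bs_i$ (i.e.\ $fs_i > bs_i$) yields $t_i < 6 fs_i$.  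Then the smallest $k$ with $t_1+\cdots+t_k \ge 6n$ forces $\sum_{i\le k} fs_i > n$, so $r_2$ has already met $r_1$ before the $(k{+}1)$th meeting; and $t_1+\cdots+t_k < 6n + t_k \le 6n + 10n = 16n$ (using $t_k \le 10n$ from Lemma~\ref{lemma 7n}).  This is the ``careful bookkeeping'' you flagged as the crux but did not supply; without it the argument is incomplete.
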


\begin{proof}
Let us refer to the meeting of $r_2$ and $r_3$ at round $T_3$ as their $1$st meeting. If $r_2$ and $r_3$ do not meet each other again for $16n$ rounds, then by Lemma \ref{lemma 7n} all three agents will find out $n$, and we are done. Therefore we assume that $r_2$ and $r_3$ meet again at least once after $T_3$ within $16n$ rounds. Let $t_i$ denote the time between the $i$th and $(i+1)$th meeting. In view of Lemma \ref{lemma 7n} we can assume that $t_i \leq 10n$ because otherwise all three agents will find out $n$, and we are done. Also if any one of these meetings is of the type Event 1-3, then we are done as such meetings allow $r_2$ and $r_3$ to agree on an upper bound of $n$ (c.f. Lemma \ref{lemma events}). So let us assume that such meetings do not occur. Therefore in each meeting after $T_3$, $r_3$ with state \texttt{return} catches $r_2$ with state \texttt{forward}. If at the time of any meeting after $T_3$, $r_3$ is aware of $n$ (by meeting $r_1$ twice in a run with $state$ \texttt{bounce}), then we are done. This is because $r_2$ finds out $n$ from $r_3$ at the time of the meeting and $r_1$ also had already found out $n$ from $r_3$. So assume that this does not happen at any of the meetings. Notice that if we can show that one such meeting takes place after $r_2$ visits $r_1$, then we are done. To see this observe that if $r_2$ visits $r_1$ then $r_2$ finds out $n$ and also informs $r_1$ about it. Then $r_3$ also comes to know about $n$ in the next meeting between $r_2$ and $r_3$. So we complete the proof by showing that a meeting between $r_2$ and $r_3$ takes place within $16n$ rounds from $T_3$ with $r_2$ having already met $r_1$. 

Recall that $t_i$ denotes the time between the $i$th and $(i+1)$th meeting. Let $fs_{i}$, $bs_{i}$ and $rs_{i}$ denote the number of successful steps made during this time by $r_2$ with state \texttt{forward}, $r_3$ with state \texttt{bounce} and $r_3$ with state \texttt{return} respectively. First we claim that if $r_2$ and $r_3$ swap over an edge between their $i$th and $(i+1)$th meeting, then $r_2$ will find out $n$ before the $(i+1)$th meeting. Since the $(i+1)$th meeting is not of the type Event $3$, we have $rs_i > 2bs_i$. If $bs_i > n$, then $rs_i > 2n > n$. This means that $r_3$ with state \texttt{return} meet $r_1$. But then $r_2$ must have met $r_1$ before this meeting and found out $n$. So let $bs_i \leq n$. Then it is not difficult to see that $bs_i + fs_i = n + rs_i$. Again using $rs_i > 2bs_i$, we have $fs_i > n + bs_i > n$. This implies that $r_2$ have met $r_1$ and found out $n$. So assume that $r_2$ and $r_3$ do not swap in between the $i$th and $(i+1)$th meeting. Then we have  $rs_{i} = bs_{i} + fs_{i}$. Furthermore we have $t_i \leq 2bs_{i} + 1 + fs_{i} + rs_{i}$. Here $2bs_{i} + 1$ is an upper bound on the time needed by $r_3$ to switch state from \texttt{bounce} to \texttt{return}. To see this, recall that $r_3$ changes its state to \texttt{return} if  one of the following takes place: (1) it finds out $n$ by meeting $r_1$ twice, (2) $r_3.BBlocked > r_3.BSteps$ is satisfied. Since we assumed that (1) does not take place, $r_3$ must have changed its state to \texttt{return} because of (2). Since this happens when the number of failed attempts to move exceeds the number of successful moves, we can conclude that $2bs_{i} + 1$ is an upper bound on the time needed by $r_3$ to switch state from \texttt{bounce} to \texttt{return}. Also, $fs_{i} + rs_{i}$ is an upper bound on the time needed for $r_3$ with state \texttt{return} to catch $r_2$. This is because the number of successful moves by $r_3$ is $rs_{i}$ and the number of failed attempts to move by $r_3$ is bounded by $fs_{i}$ since each failed move by $r_3$ implies a successful move by $r_2$. Substituting the value of $rs_{i}$ in the inequality we get, $t_i \leq 3bs_{i} + 2fs_{i} +1$. Since this meeting is not of the type Event $3$, we have $rs_i > 2bs_i$ $\implies fs_i > bs_i$. Therefore $t_i < 6fs_{i}$. Let $k$ be the smallest integer such that $t_1 + \ldots + t_k \geq 6n$. So we have $6n \leq$ $t_1 + t_2 + \ldots + t_k < 6(fs_{1} + fs_{2} + \ldots fs_{k})$ $\implies$ $n < fs_{1} + fs_{2} + \ldots fs_{k}$. This implies that $r_2$ makes enough progress to meet $r_1$ before the $(k+1)$th meeting. Therefore after the $(k+1)$th meeting all three robots have found out $n$. It remains to show that this meeting takes place within $16n$ rounds from $T_3$. For this observe the following. It follows from the definition of $k$ (which implies that $t_1 + \ldots + t_{k-1 } < 6n$) and the fact that $t_k \leq 10n$ that $t_1 + \ldots + t_k < 16n$.


\end{proof}

\begin{lemma}\label{nineteen}
Within $19n$ rounds after $T_3$ exploration of the ring is complete and all three agents terminate.
\end{lemma}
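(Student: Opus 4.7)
The plan is to invoke Lemma \ref{sixteen} and handle its two outcomes separately. By that lemma, within $16n$ rounds after $T_3$, either (i) all three agents have learned $n$, or (ii) $r_2$ and $r_3$ have agreed on an upper bound $SBound$ with $n \leq SBound \leq 3n$ (the bound coming from Lemma \ref{lemma events}, which also guarantees that exploration is already complete at the moment of agreement).

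In case (i), I would observe that the counters $TTime$ maintained by $r_2$ and $r_3$ have been running since $T_3$, so once these two agents know $n$ they both terminate exactly at round $T_3 + 16n$ by the termination rules of Algorithms \ref{Algor 3}, \ref{Algor 4}, and \ref{Algor 5}. The settled agent $r_1$ synchronizes its own $TTime$ with the informing agent's value upon being told $n$ (Algorithm \ref{Algor 2}, step 1.5) and therefore also terminates at round $T_3 + 16n \leq T_3 + 19n$.

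For case (ii), let $T \leq T_3 + 16n$ be the round of agreement. From $T$ onwards $r_2$ moves clockwise and $r_3$ counterclockwise, both starting from the common node $m$ at which they met. I will argue that within at most $n - 1$ further rounds one of them must meet $r_1$. Let $A$ and $B$ denote the numbers of successful moves of $r_2$ and $r_3$ respectively after $T$. The agents attempt to cross the same edge only in the transient configuration in which $r_3$ is the immediate clockwise neighbour of $r_2$, which forces $A + B \equiv n - 1 \pmod{n}$; since at most one edge is missing per round, outside this configuration at least one of the two succeeds in every round. Consequently $A + B$ grows by at least $1$ per round while $A + B < n - 1$, and after at most $n - 1$ rounds one has $A + B \geq n - 1$. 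At that moment the union of nodes visited by $r_2$ and $r_3$ since $T$ consists of $A + B + 1 \geq n$ consecutive positions, which is the entire ring; hence $r_1$ has been met and terminates immediately by Algorithm \ref{Algor 2}, step 1.2. The agents $r_2$ and $r_3$ finish their prescribed $SBound$ rounds of movement by round $T + SBound \leq T_3 + 16n + 3n = T_3 + 19n$ and then terminate.

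Combining both cases gives the claimed $19n$ bound. The main technical content is the collision argument in case (ii): pinpointing that the unique configuration in which both agents can be blocked by the same missing edge simultaneously is precisely the configuration in which their trajectories together have already covered the entire ring.
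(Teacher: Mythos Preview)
Your proposal is correct and follows essentially the same two-case split as the paper's proof. The paper's own argument for case (ii) simply asserts that ``within these $N$ rounds one of them will meet $r_1$'' without justification; your additional collision argument showing that in fact $r_1$ is reached within $n-1$ rounds is a valid refinement of that assertion, not a different approach.
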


\begin{proof}
By Lemma \ref{sixteen} within $16n$ rounds after $T_3$ either all three agents find out $n$ or $r_2$, $r_3$ agree on an upper bound of $n$. In the former case the agents will terminate when $TTime = 16n$, i.e., after $16n$ rounds from $T_3$. Now in the latter case when $r_2$ and $r_3$  meet at a node to agree on an upper bound $N$ of $n$, they will move in opposite direction for $N$ more rounds. Within these $N$ rounds one of them will meet $r_1$ and both will terminate. The other agents will terminate after the completion of these $N$ rounds. Recall that by Lemma \ref{lemma events} $N<3n$. Hence in this case all the three agents terminate within $19n$ rounds after $T_3$. 


\end{proof}

\begin{theorem}
\textsc{Exploration} with explicit termination is solvable by three agents with chirality in $2^{k+\lceil\log n\rceil + 3} + 23n = O(2^kn)$ rounds.
\end{theorem}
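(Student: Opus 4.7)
The plan is to assemble the chain of lemmas already established into a single overall bound on rounds, and to verify that every branch which ends in termination also corresponds to a completed exploration. First I would invoke Lemma \ref{lemma T_1} to observe that, starting with every agent in state \texttt{search} and executing the \textsc{Meeting} subroutine, a first pair-wise meeting and the creation of a \texttt{settled} agent is guaranteed by some round $T_1 \leq 2^{k+\lceil\log n\rceil + 2}$. Then Lemma \ref{lemma T2} supplies a round $T_2 < 2^{k+\lceil\log n\rceil + 3}$ by which the third agent $r_3$ has encountered either the \texttt{settled} agent $r_1$ or the agent $r_2$ that became the \emph{winner}.

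Next I would apply Lemma \ref{lemma four} to conclude that by round $T_2 + 4n$ either all three agents have already terminated, or a \texttt{settled}-\texttt{forward}-\texttt{bounce} triple has been formed at some round $T_3 \leq T_2 + 4n$. In the termination branch I would argue that exploration must also be complete: the only way this branch is entered is that the winner $r_2$ meets $r_1$ a second time while moving clockwise in every round, which forces $r_2$ to have traversed every node of the ring between the two meetings, and the subsequent $2n$-round wait ensures the remaining agent meets either $r_1$ or $r_2$ and terminates.

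In the remaining case a triple exists at time $T_3$, and Lemma \ref{nineteen} gives that within $19n$ additional rounds exploration is complete and all three agents terminate. Summing the bounds yields
\[
T_3 + 19n \;\leq\; T_2 + 4n + 19n \;<\; 2^{k+\lceil\log n\rceil + 3} + 23n,
\]
which is the explicit bound claimed. For the asymptotic form, I would note that $2^{\lceil\log n\rceil} \leq 2n$, whence $2^{k+\lceil\log n\rceil + 3} \leq 16\cdot 2^k \cdot n$ and the entire expression is $O(2^k n)$.

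The main obstacle I anticipate is not a new argument but bookkeeping: I must confirm that every termination condition listed across Algorithms \ref{Algor 1}--\ref{Algor 5} is captured by one of the three branches above (pre-triple termination via Lemma \ref{lemma four}, post-triple termination via finding $n$, or post-triple termination via agreeing on an upper bound $SBound \leq 3n$ and walking apart for $SBound$ more rounds). Lemmas \ref{lemma events}, \ref{lemma 7n}, and \ref{sixteen} have already done the substantive case analysis, so the remaining step in the write-up is simply to chain the lemma conclusions and verify that no execution branch slips through without being counted.
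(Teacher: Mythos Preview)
Your proposal is correct and follows essentially the same approach as the paper: chain Lemma~\ref{lemma T2}, Lemma~\ref{lemma four}, and Lemma~\ref{nineteen} to obtain $T_3 + 19n \leq T_2 + 4n + 19n < 2^{k+\lceil\log n\rceil+3} + 23n$. The paper's proof is just a three-line version of this chaining, and the additional bookkeeping you worry about (verifying the pre-triple termination branch of Lemma~\ref{lemma four} corresponds to a completed exploration, and that all termination conditions are accounted for) is already handled inside the proofs of Lemmas~\ref{lemma four}, \ref{lemma events}, \ref{sixteen}, and \ref{nineteen}, so nothing further is needed.
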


\begin{proof}
By Lemma \ref{nineteen} the exploration is complete and all three agents terminate within $T_3 + 19n$ rounds. By Lemma \ref{lemma four} $T_3 \leq T_2 + 4n$ and by Lemma \ref{lemma T2} $T_2 < 2^{k+\lceil\log n\rceil + 3}$. Therefore our algorithm solves  \textsc{Exploration} with explicit termination by three agents with chirality in $2^{k+\lceil\log n\rceil + 3} + 23n = O(2^kn)$ rounds.

\end{proof}

\section{Exploration by Agents without Chirality}\label{Secwoch}

\subsection{Contiguous Agreement}\label{ConAg}
In this section we define a new problem called \textsc{Contiguous Agreement}. Three agents with unique identifiers are placed at three different nodes in the ring. In each round, each agent chooses a direction according to a deterministic algorithm based on its ID  and current round. The requirement of the problem is that the agents have to choose the same direction for some $N$ consecutive rounds where $N$ is a constant unknown to the agents.

\subsubsection{The Algorithm}
Before presenting the algorithm, we describe the construction of modified identifiers which will be used in the algorithm. Recall that $r.ID$ is a binary string of length $k$. We now describe the construction of the corresponding modified identifier $r.MID$ which is a binary string of length $\frac{k(k-1)}{2} + k + 1$. We shall first concatenate a string of length $\frac{k(k-1)}{2}$ at end of $r.ID$. Let us write $\frac{k(k-1)}{2} = l$. To define the string, we shall identify each position of the string as, instead of an integer from $[l] = \{ 1, \ldots,l \}$, a 2-tuple from the set $S = \{(u,v) \in [k]\times [k] \mid u<v\}$. In order to formally describe this, let us  define a bijection $\phi : S \rightarrow [l]$ in the following way. Notice that $|S|=l$. Arrange the elements of $S$ in lexicographic order. For any $(u,v) \in S$, we define $\phi((u,v))$ to be the position of $(u,v)$ in this arrangement. For example, if $k=4$, then the elements of $S$, arranged in lexicographic order, are (1,2), (1,3), (1,4), (2,3), (2,4), (3,4). Therefore, we have $\phi((1,2)) = 1$, $\phi((1,3)) = 2$, $\phi((1,4)) = 3$, $\phi((2,3)) = 4$, $\phi((2,4)) = 5$ and $\phi((3,4)) = 6$. Now we define the string of length $l$ that will be concatenated with  $r.ID$. The $i$th bit of the string is the $\mathbb{Z}_2$ sum of the $u$th and $v$th bit of $r.ID$ where $(u,v) = \phi^{-1}(i)$. After the concatenation, we get a string of length $k+l = k + \frac{k(k-1)}{2}$. Finally we append  0 at the beginning of this  string to obtain $r.MID$ of length $\frac{k(k-1)}{2} + k + 1$ (c.f. Fig. \ref{fig: mod}).

\begin{figure}[htb!]
\centering
\fontsize{8pt}{8pt}\selectfont
\def\svgwidth{.9\textwidth}
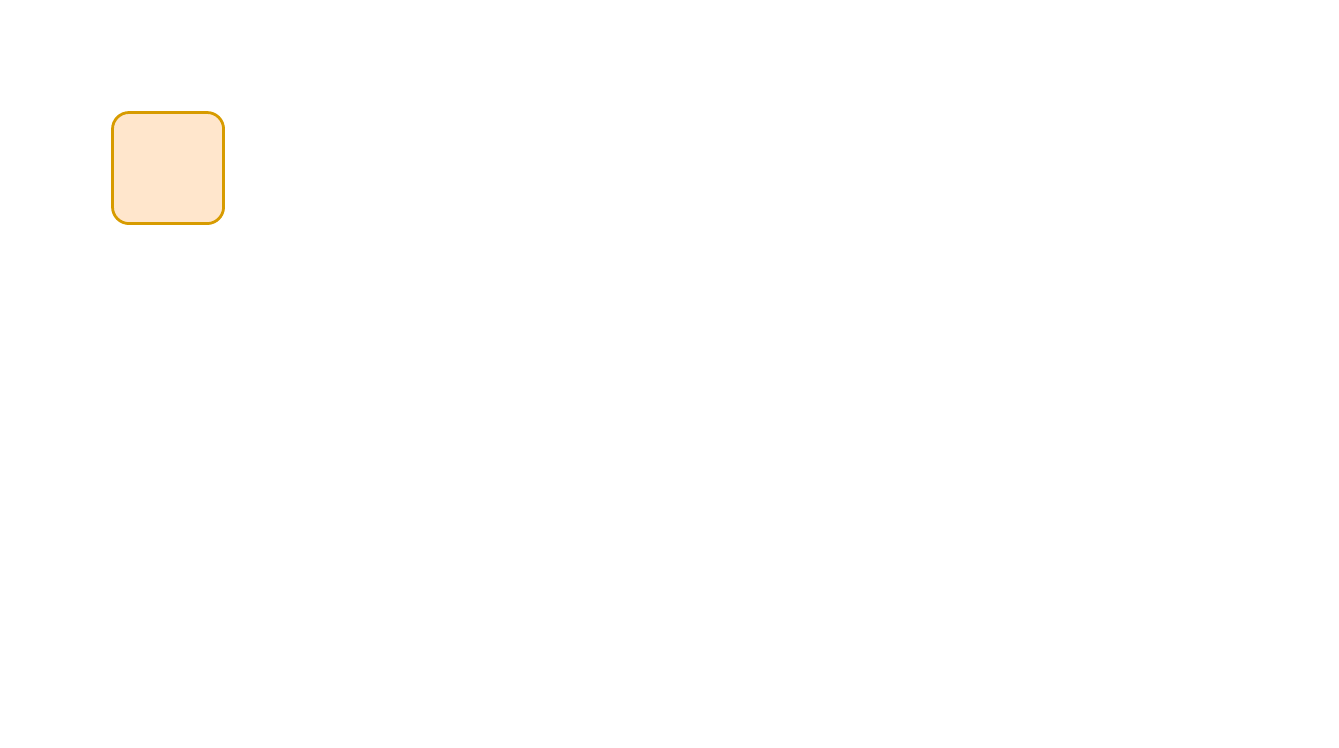
\caption[]{The construction of the modified identifier.}
\label{fig: mod}
\end{figure}

We now present the algorithm that solves \textsc{Contiguous Agreement}. The algorithm works in phases with the length of the phases being $2^{j}\left(\frac{k(k-1)}{2} + k + 1 \right)$, $j= 0,1,2, \ldots$. For a string $S$ and a positive integer $t$, let $Dup(S, t)$ denote the string obtained by repeating  each bit of string $S$  $t$ times. For example, $Dup(101, 3) = 111000111$. For the $j$th phase, the agent $r$ computes $Dup(r.MID, 2^{j})$. Notice that the length of the $j$th phase is equal to the length of $Dup(r.MID, 2^{j})$.
In the $i$th round of the $j$th phase, $r$ moves left if the $i$th bit of $Dup(r.MID, 2^{j})$ is $0$ and otherwise moves right.

\subsubsection{Correctness}

\begin{theorem}\label{thm : ca}
The algorithm described above solves \textsc{Contiguous Agreement}.
\end{theorem}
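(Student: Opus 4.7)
The plan is to exhibit, for any three distinct identifiers and any assignment of chiralities to the three agents, a single position $p$ in the MIDs at which the three bits encode the direction-agreement pattern demanded by the chiralities. Since bit $p$ of $r.MID$ controls $2^{j}$ consecutive local moves within phase $j$ (by the $Dup(\cdot, 2^{j})$ construction), such a position immediately yields $2^{j}$ consecutive rounds in which all three agents move in the same global direction; choosing $j$ large enough that $2^{j} \geq N$ then gives the $N$ consecutive rounds required by \textsc{Contiguous Agreement}.

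For the combinatorial heart of the argument, write $x_{i} = r_{i}.ID$ and put $a = x_{1} \oplus x_{2}$, $b = x_{1} \oplus x_{3}$, $c = a \oplus b = x_{2} \oplus x_{3}$. Since the IDs are pairwise distinct, each of $a, b, c$ is non-zero and they are pairwise distinct. Fixing a global reference direction and encoding each agent's chirality as $\pm 1$, there are four patterns to realize: (a) all three local bits equal; (b) $r_{2}$ and $r_{3}$ agree but $r_{1}$ differs; (c) $r_{1}$ and $r_{3}$ agree but $r_{2}$ differs; (d) $r_{1}$ and $r_{2}$ agree but $r_{3}$ differs. Case (a) is handled by the prepended leading $0$: all three MIDs start with $0$, so position $1$ already works. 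For case (d), if some ID-bit index $i$ satisfies $a[i]=0$ and $b[i]=1$, the corresponding position of the ID portion of the MID already realizes the pattern. Otherwise $b$ is bitwise dominated by $a$; since $a \neq b$ and $b \neq 0$, this yields indices $u, v$ with $a[u] = a[v] = 1$, $b[u] = 0$, $b[v] = 1$, so on $\{u,v\}$ the string $a$ is constant while $b$ is non-constant, and the XOR bit of the MID associated with this pair realizes the required pattern via the bijection $\phi$. Cases (b) and (c) are symmetric, with the dominance relations $a$ dominated by $c$ and $a$ dominated by $b$ respectively playing the analogous roles.

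Once this combinatorial lemma is in hand, the theorem is immediate: in phase $j$ with $2^{j} \geq N$, the block of $2^{j}$ consecutive rounds corresponding to the matched MID position furnishes at least $N$ consecutive rounds in which all three agents choose the same global direction. The main difficulty is the case analysis itself, and in particular the degenerate subcases where one of the ID-difference strings is bitwise dominated by another; these are precisely the configurations where the ID portion alone is insufficient and where the XOR portion of the MID, together with the prepended leading $0$, becomes essential.
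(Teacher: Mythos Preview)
Your argument is correct, but it is organized differently from the paper's proof. The paper does not treat the three ``odd one out'' cases separately via dominance arguments. Instead, it observes that since $r_1.ID \neq r_2.ID$ there is an index $a$ where one agent (say $r_2$, by a labeling WLOG) is the odd one out, and since the remaining two agents $r_1, r_3$ also have distinct IDs and agree at $a$, there is another index $b \neq a$ where a \emph{different} agent is the odd one out. One then checks directly that the XOR position corresponding to the pair $\{a,b\}$ makes the third agent the odd one out. Thus the paper obtains all three non-trivial patterns from a single pair of ID indices, avoiding any case split on dominance.

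Your approach instead argues each pattern independently: if no ID position realizes the target pattern, you deduce a bitwise dominance relation between two of the difference strings $a,b,c$ and extract a suitable pair $\{u,v\}$ from that. This is a bit heavier (three dominance sub-arguments instead of one XOR computation), but it has the advantage of being completely explicit about why each pattern is realizable and of not relying on a WLOG whose justification requires a moment's thought. Both proofs of course hinge on the same structural fact, namely that appending all pairwise XORs of ID bits (together with the leading $0$) makes the map $r \mapsto r.MID$ rich enough that at every position the three agents' bits realize each of the four possible agreement patterns somewhere.
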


\begin{proof}
Let us denote the three agents by $r_1, r_2$ and $r_3$. For a binary string $S$ and an index $1 \leq \alpha \leq |S|$, we denote by $S[\alpha]$ the $\alpha$th bit of $S$.             We first show that in the $0$th phase, there is a round in which all three agents choose the same direction. In the $0$th phase, each agent $r_i$ uses the string $Dup(r_i.MID, 2^{0}) = r_i.MID$ to set its direction. First consider the case where all three agree on the orientation, say the left according to each of the agents is the counterclockwise direction. Clearly if there is an index $\alpha$ where the strings $r_1.MID$, $r_2.MID$ and $r_3.MID$ have the same bit, i.e., $r_1.MID[\alpha]$ $= r_2.MID[\alpha]$ $= r_3.MID[\alpha]$, then the agents will decide the same the direction in the $\alpha$th round of the $0$th phase. Now suppose that the agents do not agree on orientation. Consider the case where  left according to $r_1, r_2$ is the counterclockwise direction (i.e., $r_1$ and $r_2$ have agreement on  orientation), while left according to $r_3$ is the clockwise direction. Clearly in this case, if there is an index $\alpha$ such that $r_1.MID[\alpha]$ $= r_2.MID[\alpha]$ $\neq r_3.MID[\alpha]$, then the agents will decide the same the direction in the $\alpha$th round of the $0$th phase. Therefore, it follows from the above discussion that it suffices to show that $\exists$ indices $\alpha$, $\beta$, $\gamma$ and $\eta$ such that,
\begin{enumerate}
    \item $r_1.MID[\alpha]$ $= r_2.MID[\alpha]$ $= r_3.MID[\alpha]$
    \item $r_1.MID[\beta]$ $= r_2.MID[\beta]$ $\neq r_3.MID[\beta]$
    \item $r_1.MID[\gamma]$ $= r_3.MID[\gamma]$ $\neq r_2.MID[\gamma]$
    \item $r_2.MID[\eta]$ $= r_3.MID[\eta]$ $\neq r_1.MID[\eta]$
\end{enumerate}



Recall that each of the strings start with $0$. Hence, we have $\alpha=1$. Since  $r_1.ID \neq r_2.ID$, $\exists$ an index $a$ so that $r_1.ID[a] \neq r_2.ID[a]$. Without loss of generality let  $r_3.ID[a] = r_1.ID[a]$, i.e., $r_1.ID[a] = r_3.ID[a] \neq r_2.ID[a]$. So we let $\gamma = a+1$ which fulfills the requirement that $r_1.MID[\gamma]$ $= r_3.MID[\gamma]$ $\neq r_2.MID[\gamma]$. We add 1 here because of the  0 appended at the beginning of the MIDs. Similarly since $r_1.ID \neq r_3.ID$, $\exists$ an index $b \neq a$ so that $r_1.ID[b] \neq r_3.ID[b]$. Without loss of generality let  $r_2.ID[b] = r_1.ID[b]$, i.e., $r_1.ID[b] = r_2.ID[b] \neq r_3.ID[b]$. So we let $\beta = b+1$. So now it remains to find the index $\eta$. We take $\eta$ to be the index of MID where we put the sum of the $\beta$th bit of MID ($\beta-1$th bit of ID) and the $\gamma$th bit of MID ($\gamma-1$th bit of ID). We have to show that $r_2.MID[\eta]$ $= r_3.MID[\eta]$ $\neq r_1.MID[\eta]$. For the equality, observe the following. 

\begin{align*}
r_2.MID[\eta] &= r_2.MID[\beta] +  r_2.MID[\gamma]\\
&= (1+ r_3.MID[\beta]) + (1+ r_3.MID[\gamma]) && (\text{for } x,y \in \mathbb{Z}_2, x \neq y \iff x =  y+1) \\
&= r_3.MID[\beta] + r_3.MID[\gamma]\\
&= r_3.MID[\eta]\\
\end{align*}

To prove the inequality, we assume for the sake of contradiction that $r_1.MID[\eta] = r_2.MID[\eta]$. This leads to a contradiction as shown in the following.

\begin{align*}
r_1.MID[\eta] &= r_2.MID[\eta]\\
\implies r_1.MID[\beta] +  r_1.MID[\gamma] &= r_2.MID[\beta] +  r_2.MID[\gamma]\\
\implies r_2.MID[\beta] +  (1 + r_2.MID[\gamma]) &= r_2.MID[\beta] +  r_2.MID[\gamma]\\
\implies 1 &= 0\\
\end{align*}






                             








Hence, we show that there is round in the $0$th phase where all three agents will decide the same  direction. It immediately follows from the proof that there are $2^j$ consecutive rounds in the $j$th phase where all three agents will choose the same  direction. Hence, for $j = \lceil\log N\rceil$, the agents will choose the same direction for $N$ consecutive rounds in the $j$th phase.
\end{proof}

\subsection{Meeting by Agents without Chirality}\label{achiralmeet}

In Section \ref{MeetChiral}, we describe an algorithm that solves \textsc{Meeting} by agents with chirality. In the current setting where agents do not have chirality, the main idea is to use the strategy of \textsc{Contiguous Agreement} so that the agents can implicitly agree on a common direction and  solve \textsc{Meeting} by employing the strategy from section \ref{MeetChiral}. Similar to the algorithm for \textsc{Contiguous Agreement}, our algorithm for \textsc{Meeting} also works in phases. In the algorithm for \textsc{Contiguous Agreement} the length of the phases were $2^{j}\left(\frac{k(k-1)}{2} + k + 1 \right)$, $j= 0,1,2, \ldots$. For \textsc{Meeting}, the phases will be of length $2^{j+k}\left(\frac{k(k-1)}{2} + k + 1 \right)$, $j= 0,1,2, \ldots$. In the $j$th phase of the algorithm an agent $r$ uses the string $Dup(r.MID, 2^{j+k})$ to decide its movement. Notice that the length of  $Dup(r.MID, 2^{j+k})$ is equal to the length of the $j$th phase. The string $Dup(r.MID, 2^{j+k})$ is a concatenation of $\left(\frac{k(k-1)}{2} + k + 1 \right)$ blocks of length $2^{j+k}$ where each block consists of all 0's or all 1's. Our plan is to simulate the \textsc{Meeting} algorithm from section \ref{MeetChiral}. So, in the $2^{j+k}$ rounds 
corresponding to each block, the agent $r$ will (try to) move in the first $val(r.ID)2^{j}$ rounds and will be stationary for the remaining $\left(2^{k} - val(r.ID)\right)2^{j}$ rounds. If the block consists of $0$'s, then the movement will be towards left and otherwise towards right.

\begin{theorem}\label{ th meetwo}
The above algorithm solves \textsc{Meeting} for  three agents without chirality. The algorithm ensures that the meeting takes place within $k^2 2^{k+\lceil\log n\rceil + 3}$ rounds.
\end{theorem}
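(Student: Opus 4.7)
The plan is to reduce correctness to a combination of Theorem~\ref{thm : ca} (Contiguous Agreement) and Theorem~\ref{thm: meet chiral} (chiral Meeting). The essential observation is that the $j$th phase of the algorithm uses $Dup(r.MID, 2^{j+k})$ to choose directions, which is structurally identical to the $(j+k)$th phase of the Contiguous Agreement construction. Hence, by (the proof of) Theorem~\ref{thm : ca}, in the $j$th phase there exists a contiguous block of $2^{j+k}$ rounds during which all three agents pick the same global direction (say, clockwise).

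Within such an agreement block, each agent $r$ executes the movement pattern that simulates one iteration of the chiral algorithm of Section~\ref{MeetChiral}: it (attempts to) move in the common direction for $val(r.ID) \cdot 2^{j}$ rounds and then remains stationary for $(2^{k}-val(r.ID)) \cdot 2^{j}$ rounds. This is precisely the behaviour of phase~$j$ of the chiral Meeting algorithm executed in a shared frame of reference. I can therefore transplant the analysis of Theorem~\ref{thm: meet chiral}: setting $p=\lceil\log 2n\rceil=\lceil\log n\rceil+1$, during the agreement block of phase~$p$ the first agent to come to rest is followed by at least $2n$ consecutive rounds in which the remaining two agents try to move in the common direction. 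Lemma~\ref{lemma main} then forces a pair of agents to meet within these $2n$ rounds.

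A meeting is therefore guaranteed no later than the end of phase~$p$ (any earlier collision is only a bonus). Summing the phase lengths gives the time bound
\[
\sum_{j=0}^{p} 2^{j+k}\!\left(\tfrac{k(k-1)}{2}+k+1\right)
< 2^{k+p+1}\!\left(\tfrac{k(k-1)}{2}+k+1\right)
\le k^{2}\cdot 2^{k+\lceil\log n\rceil+2},
\]
where the last inequality uses $k\ge 2$, so that $\tfrac{k(k-1)}{2}+k+1\le k^{2}$. This is well within the claimed bound $k^{2}\,2^{k+\lceil\log n\rceil+3}$.

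The routine step is checking that the MID combinatorics of Theorem~\ref{thm : ca} are insensitive to the duplication multiplier, so that the agreement guarantee with $2^{j+k}$-round blocks follows verbatim. The only real obstacle is to cleanly identify each $2^{j+k}$-round agreement block as a faithful simulation of chiral phase~$j$ of the Section~\ref{MeetChiral} algorithm, so that Lemma~\ref{lemma main} can be invoked in the non-chiral setting without having to reprove it.
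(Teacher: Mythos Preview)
Your proposal is correct and follows essentially the same route as the paper: set $p=\lceil\log 2n\rceil$, invoke the Contiguous Agreement analysis to locate a $2^{p+k}$-round block in phase~$p$ where all three agents share a direction, and then apply the chiral \textsc{Meeting} argument (via Lemma~\ref{lemma main}) inside that block. Your time computation is the same sum as the paper's and in fact yields the slightly sharper constant $k^{2}\,2^{k+\lceil\log n\rceil+2}$, comfortably within the stated bound.
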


\begin{proof}
Let $p = \lceil\log 2n\rceil $. If two agents have met before the $p$th phase, then we are done. Otherwise,  we show that a pair of agents must meet during the $p$th phase. Recall that in the $p$th phase, an agent $r$ uses the bits of the string $Dup(r.MID, 2^{p+k})$ to decide its movement in each round. Each block of $Dup(r.MID, 2^{p+k})$ is $2^{p+k}$ bits long. From the proof of Theorem \ref{thm : ca} it follows that there is a block in $Dup(r.MID, 2^{p+k})$ so that the agents have an agreement in direction in the rounds corresponding to that block. Recall that the agents simulate the \textsc{Meeting} algorithm from section \ref{MeetChiral} in the rounds corresponding to each block. Hence it follows from Theorem \ref{thm: meet chiral} that two agents are guaranteed to meet during the rounds corresponding to the aforesaid block. Therefore, the algorithm ensures that the meeting takes place within $2^k\left(\frac{k(k-1)}{2} + k + 1 \right)\sum_{i=0}^{p} 2^i < k^2 2^{k+\lceil\log n\rceil + 3}$ rounds.
\end{proof}


\subsection{Exploration with Termination by Agents without Chirality}

For simplicity assume that the agents are placed arbitrarily at distinct nodes of the ring. We shall remove this assumption at the end of this section.
Initially the $state$ variable of all three agents are set to \texttt{search}. We shall adopt a strategy similar to the one used in Section \ref{Exprwc}. In fact, we only need to make some modifications in the algorithms to be executed by the agents with $state$ \texttt{search} and \texttt{settled}.

     The agents will execute the \textsc{Meeting} algorithm described in Section \ref{achiralmeet} until another agent is encountered. The agents will keep count of the number of rounds since the beginning in the variable $STime$. Now by Theorem \ref{ th meetwo} two of the agents are guaranteed to meet within $k^2 2^{k+\lceil\log n\rceil + 3}$ rounds. Upon meeting the agents will agree on a common direction, say the right direction of the agent with larger ID. Without loss of generality assume that the agreed direction is the clockwise direction.      The agent with smaller ID, say $r_1$, will become the \texttt{settled} agent and the one with larger ID, say $r_2$, will continue moving in the  clockwise direction. The agent $r_1$ will save the port number leading to the agreed direction, i.e., clockwise.   Let $r_3$ denote the third agent which is still executing the \textsc{Meeting} algorithm. Using similar arguments as in Lemma \ref{lemma T2} we can show that a second meeting is guaranteed to take place on or before $ \lceil\log 2n\rceil + 1$th phase, i.e., within $k^2 2^{k+\lceil\log n\rceil + 4}$ rounds from the start of the algorithm. However unlike in Section \ref{Exprwc} where the agents had chirality, here the second meeting may also take place between $r_1$ and $r_2$. This is because $r_3$ is moving in clockwise direction in some rounds and counterclockwise in other rounds. Hence there is a possibility that $r_2$ and $r_3$ may swap over an edge and $r_1$ meets $r_2$ first. However even then it is not difficult to see that $r_3$ is guaranteed to meet $r_1$ or $r_2$ on or before ($ \lceil\log 2n\rceil+1)$th phase i.e., within $k^2 2^{k+\lceil\log n\rceil + 4}$ rounds from the start of the algorithm. To see this, observe that $r_2$ and $r_3$ will try to move in the same direction for some $4n$ consecutive rounds in the ($ \lceil\log 2n\rceil+1) = \lceil\log 4n\rceil$th phase. Within the first $2n$ rounds a meeting should take place by Lemma \ref{lemma main}. If this meeting involves $r_3$ then we are done. Otherwise $r_1$ and $r_2$ meet each other and then by again applying by Lemma \ref{lemma main}, $r_3$ will meet one of them within the following $2n$ rounds.

     Now consider the following cases depending on which of the two robots meet on the second meeting.

\begin{enumerate}
    \item Suppose that the second meeting takes place between $r_1$ and $r_2$. In this case the ring has been explored and $r_2$ finds out $n$ and informs $r_1$ about it. Then $r_2$ will continue moving in the clockwise direction. Both agents will terminate after the round when $STime = k^2 2^{k+\lceil\log n\rceil + 4}$. Recall that $r_3$ is guaranteed to meet one of them in the meantime and will terminate immediately.
    
    \item Suppose that the second meeting takes place between $r_2$ and $r_3$. Then $r_2$ informs $r_3$ about the agreed direction. Hence the case reduces to the setting of Section \ref{Exprwc}. Therefore $r_2$ and $r_3$ will change their $state$ to \texttt{forward} and \texttt{bounce} respectively and execute the algorithms as before.
    
    \item Now suppose that the second meeting takes place between $r_1$ and $r_3$. In this case $r_3$ will come to know about the agreed direction and again the case reduces to the setting of Section \ref{Exprwc}. So $r_3$ will continue to move in the agreed direction i.e., clockwise.
\end{enumerate}

\begin{theorem}
\textsc{Exploration} with explicit termination is solvable by three agents without chirality in $  k^2 2^{k+\lceil\log n\rceil + 4}  + 23n = O(k^2 2^k n)$ rounds.
\end{theorem}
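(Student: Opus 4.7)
The plan is to mirror the structure of the chiral analysis, replacing each appeal to the chiral \textsc{Meeting} bound by the non-chiral bound from Theorem \ref{ th meetwo}, and keeping the $O(n)$ post-triple-formation analysis unchanged. First I would apply Theorem \ref{ th meetwo} to conclude that the very first meeting, between two agents both in state \texttt{search}, occurs within $k^{2} 2^{k+\lceil\log n\rceil + 3}$ rounds; this creates the \texttt{settled} agent $r_1$ and sets $r_2$ to move persistently in the agreed (clockwise) direction. The saved port at $r_1$ and the agreed direction at $r_2$ play the role that chirality played in Section \ref{Exprwc}.

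Next I would argue, using exactly the reasoning already sketched in the text preceding the theorem, that within the $\lceil\log 2n\rceil+1$-st phase $r_2$ and $r_3$ simultaneously attempt to move in the same direction for at least $4n$ consecutive rounds. Within the first $2n$ of those rounds, Lemma \ref{lemma main} forces some meeting; if it involves $r_3$, the second meeting has happened, and otherwise $r_1$ meets $r_2$ and a second application of Lemma \ref{lemma main} during the remaining $2n$ rounds forces $r_3$ to meet one of $r_1,r_2$. Hence a second meeting is guaranteed within $k^{2} 2^{k+\lceil\log n\rceil + 4}$ rounds from the start. I would then split on which pair meets second. In case 1 ($r_1$ meets $r_2$), the ring is fully explored, $r_2$ reports $n$ to $r_1$, and both wait until $STime = k^{2} 2^{k+\lceil\log n\rceil + 4}$; by the previous paragraph $r_3$ meets one of them within this window and terminates at once. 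In cases 2 and 3, as soon as $r_3$ learns the agreed direction (from $r_2$ directly, or from the port stored at $r_1$), the configuration is identical to a chiral \texttt{settled}-\texttt{forward}-\texttt{bounce} starting situation.

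Consequently, from this point on I can reuse Lemmas \ref{lemma four} and \ref{nineteen} verbatim: within $4n$ more rounds either all three agents have already terminated or a \texttt{settled}-\texttt{forward}-\texttt{bounce} triple is formed at some round $T_3$, and within $19n$ more rounds after $T_3$ exploration is complete and all three agents terminate. Adding the $23n$ post-second-meeting overhead to the $k^{2} 2^{k+\lceil\log n\rceil + 4}$ bound for reaching the second meeting yields the claimed bound $k^{2} 2^{k+\lceil\log n\rceil + 4} + 23n = O(k^{2} 2^{k} n)$. Finally, as in Section \ref{Exprwc}, I would dispose of the initial-colocation assumption by observing that any initial co-located pair is detected in the first \textsc{Look} and can be handled as an immediate meeting event, which does not affect the asymptotic bound.

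The main obstacle I expect is the transition argument in cases 2 and 3: I must verify that after $r_3$ learns the agreed direction, every invariant used in the chiral analysis (the counters $FSteps$, $BSteps$, $RSteps$, $BBlocked$, $TTime$ being properly initialized, and the triple actually forming in a \texttt{forward}/\texttt{bounce} configuration rather than some mixed transient state) truly holds. In particular, in case 1 I need to confirm that $r_3$'s meeting with the already-terminated-pending pair $\{r_1,r_2\}$ still occurs within the stated window even though $r_1$ and $r_2$ may now be executing a simple wait-until-$STime$-threshold rather than the $\textsc{Meeting}$ schedule; this follows because $r_2$ keeps moving in the agreed direction and the same application of Lemma \ref{lemma main} applies, but it is the step that requires the most care to state precisely.
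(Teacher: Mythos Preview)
Your proposal is correct and follows essentially the same approach as the paper: the paper's own proof is a single sentence pointing back to the preceding discussion (the non-chiral \textsc{Meeting} bound, the three-case split at the second meeting, and the reduction to Section~\ref{Exprwc}) together with Lemmas~\ref{lemma four} and~\ref{nineteen}, which is exactly the decomposition you spell out. Your version is simply a more explicit unpacking of that one-line argument, including the same $k^{2}2^{k+\lceil\log n\rceil+4}$ bound for the second meeting and the same $4n+19n=23n$ tail from the chiral analysis.
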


\begin{proof}
 It follows from the above discussions and the results proved in Section \ref{Exprwc} that  the agents will terminate after exploring the ring within  $  k^2 2^{k+\lceil\log n\rceil + 4}  + 23n = O(k^2 2^kn)$ rounds.

\end{proof}

Recall that we assumed that  the agents are placed initially at distinct nodes of the ring. We now show that this assumption is not necessary if the initial configuration has two agents $r_1$, $r_2$ at the same node and the third agent $r_3$ at a different node. Then the case reduces to the situation when the first meeting takes place. Then $r_1$ and $r_2$ will change their $state$ to \texttt{settled} and \texttt{forward} while $r_3$ will execute the \textsc{Meeting} algorithm with $state$ \texttt{search}.
The algorithm will progress as before and achieve exploration with termination.

If all three agents are in the same node in the initial configuration then the three agents will compare their identifiers and will change their $state$ to \texttt{settled}, \texttt{forward} and \texttt{bounce} accordingly. Again, the algorithm will progress as before and achieve exploration with termination.

\section{Concluding Remarks}

We showed that \textsc{Exploration} (with explicit termination) in a dynamic always connected ring without any landmark node is solvable by three non-anonymous agents without chirality. This is optimal in terms of the number of agents used as the problem is known to be unsolvable by two agents. Our algorithm takes $O(k^2 2^k n)$ rounds to solve the problem where $n$ is the size of the ring and $k$ is the length of the identifiers of the agents. An interesting question is whether the problem can be solved in $O(poly(k)n)$ rounds. However with $k = O(1)$ the round complexity is $O(n)$. This is asymptotically optimal as there are $n$ nodes to be explored and in each round three agents can visit at most three nodes.

A challenging problem that remains open is \textsc{Exploration} in a dynamic network of arbitrary underlying topology. Except for some bounds on the number of agents  obtained in the recent work \cite{GOTOH20211}, almost nothing is known. As for other special topologies only the case of torus has been studied \cite{DBLP:conf/icdcs/GotohSOKM18}. Comparison of what can be achieved by anonymous and non-anonymous agents in these settings is an direction to explore.

\bibliographystyle{plainurl}
\bibliography{dynamic}

\end{document}